\tikzstyle{commdiag}=[matrix of math nodes, row sep=3em, column sep=5.5em, text height=1.5ex, text depth=0.25ex,ampersand replacement=\&]
\tikzset{>=stealth}
\title{A minimal-variable symplectic integrator on spheres}
\author[1]{Robert McLachlan\thanks{\emaillink{r.mclachlan@massey.ac.nz}}}
\author[2]{Klas Modin\thanks{\emaillink{klas.modin@chalmers.se}}}
\author[3]{Olivier Verdier\thanks{\emaillink{olivier.verdier@hvl.no}}}
\affil[1]{
	Institute of Fundamental Sciences, Massey University, New Zealand
}
\affil[2]{
	Mathematical Sciences, Chalmers University of Technology, Sweden
}
\affil[3]{
  Department of Computing, Mathematics and Physics, Western Norway University of Applied Sciences, Bergen, Norway
}
\date{\today}
\providecommand{\abs}[1]{\lvert#1\rvert}
\providecommand{\vect}[1]{\boldsymbol{#1}}
\newcommand{\ud}{\mathrm{d}}
\newcommand{\pd}{\partial}
\newcommand{\R}{{\mathbb R}}
\newcommand{\CC}{{\mathbb C}}
\newcommand{\Xcal}{\mathfrak{X}}
\newcommand*\SO{\mathrm{SO}}
\newcommand*\so{\mathfrak{so}}
\newcommand*\SU{\mathrm{SU}}
\newcommand*\su{\mathfrak{su}}
\newcommand{\Rattle}{{\small RATTLE}}
\newcommand\RR{\mathbf{R}}
\newcommand\bw{\boldsymbol{w}}
\newcommand{\w}{\vect{w}}
\newcommand{\midp}[1]{\overline{#1}}
\newcommand{\difp}[1]{\overrightarrow{#1}}
\newcommand{\wminus}{\vect{w}}
\newcommand{\wplus}{\vect{W}}
\newcommand{\wmid}{\midp{\wminus}}
\newcommand{\wdif}{\difp{\wminus}}
\newcommand{\aneutral}{\sigma}
\newcommand{\aminus}{\sigma}
\newcommand{\aplus}{\Sigma}
\newcommand{\amid}{\midp{\aminus}}
\newcommand{\adif}{\difp{\aminus}}
\newcommand{\bneutral}{\lambda}
\newcommand{\bminus}{\lambda}
\newcommand{\bplus}{\Lambda}
\newcommand{\bmid}{\midp{\bminus}}
\newcommand{\bdif}{\difp{\bminus}}
\NewDocumentCommand\HH{s}{\mathbf{H}\IfBooleanTF{#1}{_*}{}}
\NewDocumentCommand\Hb{s}{{V}\IfBooleanTF{#1}{_*}{}}
\RenewDocumentCommand\CC{s}{\mathbf{C}\IfBooleanTF{#1}{_*}{}}
\NewDocumentCommand\conj{O{z}}{\overline{\vect #1}}
\NewDocumentCommand\projS{}{\rho}
\def\u{{\boldsymbol{u}}}
\def\f{X}
\begin{document}

\maketitle

\begin{abstract}
We construct a symplectic, globally defined, minimal-variable, equivariant integrator on products of 2-spheres.
Examples of corresponding Hamiltonian systems, called spin systems, include the reduced free rigid body, the motion of point vortices on a sphere, and the classical Heisenberg spin chain, a spatial discretisation of the Landau--Lifshitz equation.
The existence of such an integrator is remarkable, as the sphere is neither a vector space, nor a cotangent bundle, has no global coordinate chart, and its symplectic form is not even exact.
Moreover, the formulation of the integrator is very simple, and resembles the geodesic midpoint method, although the latter is \emph{not} symplectic.
\end{abstract}

\section{Introduction} 
\label{sec:introduction}

The 2--sphere, denoted $S^2$, is a fundamental symplectic manifold that occurs as the phase space, or part of the phase space, of many Hamiltonian systems in mathematical physics.
A globally defined symplectic integrator on $S^2$ needs a minimum of three variables, since the lowest-dimensional vector space in which $S^2$ can be embedded is $\RR^3$.
To construct such a minimal-variable, symplectic integrator is, however, surprisingly difficult, and has long been an open problem. Here we solve that problem.
We equip the direct product of $n$ 2-spheres, $(S^2)^n$, with the symplectic form~$\omega$ given by the weighted sum of the area forms 
\begin{equation}\label{eq:symp_form}
	\omega = \sum_{i=1}^{n}\kappa_i \ud A_i, \quad \kappa_i > 0,
\end{equation}
where $\ud A_i$ is the standard area form on the $i$:th sphere.

Throughout the paper, we represent $S^2$ by the space of unitary vectors in $\RR^3$.
General Hamiltonian systems on~$(S^2)^n$ with respect to the symplectic form~\eqref{eq:symp_form} take the form
\begin{equation} 
	\label{eq:classspinsys}
	\dot{\w}_i = \w_i \times \frac{1}{\kappa_i}\frac{\pd H}{\pd \w_i},\quad \w_i\in S^2,\quad i=1,\ldots,n, \quad H\in C^{\infty}((S^2)^n).
\end{equation}
We provide a global, second order symplectic integrator for such systems, which we call the \emph{spherical midpoint method}.
The method is remarkably simple: for a Hamiltonian function $H\in C^{\infty}((S^2)^n)$, it is the mapping 
\begin{equation}
	\paren[\big]{S^2}^n\ni (\wminus_1,\ldots,\wminus_n)\mapsto\paren[\big]{\wplus_1,\ldots,\wplus_n}\in \paren[\big]{S^2}^n,
\end{equation}
defined by
\begin{mdframed}
\begin{equation}\label{eq:short_method}
	\frac{\wplus_i - \wminus_i}{h} = \frac{\wminus_i+\wplus_i}{\abs{\wminus_i+\wplus_i}}\times \frac{1}{\kappa_i}\frac{\pd H}{\pd \w_i}\left( \frac{\wminus_1+\wplus_1}{\abs{\wminus_1+\wplus_1}},\ldots,\frac{\wminus_n+\wplus_n}{\abs{\wminus_n+\wplus_n}} \right),
\end{equation}
\end{mdframed}
where $h>0$ is the step size.
In addition to be symplectic, this method is \emph{equivariant}, meaning it respects the intrinsic symmetries of the 2--sphere.
Put differently, it respects the homogeneous space structure $S^2\simeq \SO(3)/\SO(2)$, a property analogous to the affine equivariance of B-series methods~\cite{McMoMuVe2014}.
Note also, as we observe in \autoref{rk:geodeticmp}, that our method is \emph{not} the geodesic midpoint method applied to \eqref{eq:classspinsys}.

Equations of the form \eqref{eq:classspinsys} are called 
\emph{classical spin systems}~\cite{La2011}.
The simplest example is the reduced free rigid body
\begin{equation}
	\dot{\w} = \w \times \vect{I}^{-1} \w,\quad \w\in S^2.
\end{equation}
Other examples include the motion of massless particles in a divergence-free vector field on the sphere (for example, test particles in a global weather simulation), the motion of~$n$ point vortices in a ideal incompressible fluid on the sphere, and the set of Lie--Poisson systems on $\mathfrak{so}(3)^*$.
Spin systems with large $n$ are obtained by spatial discretisations of Hamiltonian PDEs on $S^2$.
An example is the classical Heisenberg spin chain of micromagnetics,
\begin{equation}
	\dot{\w}_i = \w_i \times (\w_{i+1} - 2\w_{i} + \w_{i-1}),\quad \w_0 = \w_n, \quad \w_i\in S^2,
\end{equation}
which is a spatial discretisation of the Landau--Lifshitz PDE
\begin{equation}\label{eq:landau_lifshitz}
	\dot{\w} = \w\times \w'', \quad \w\in C^{\infty}(S^1,S^2).
\end{equation}

Apart from its abundance in physics, there are a number of reasons for focusing on the phase space $(S^2)^n$. 
It is the first example of a symplectic manifold that
\begin{itemize} \item is \emph{not} a vector space, \item is \emph{not} a cotangent bundle, \item does \emph{not} have a global coordinate chart or a cover with one, and \item is \emph{not} exact (that is, the symplectic form is not exact). \end{itemize}
Furthermore, next to cotangent bundles, the two main types of symplectic manifolds are coadjoint orbits of Lie--Poisson manifolds and K\"ahler manifolds; $(S^2)^n$ is the simplest example of both of these. 

Lie group integrators for general systems on $(S^2)^n$ are developed in~\cite{LeNi03}.
These are, however, not symplectic.
Symplectic integrators for \emph{some} classical spin systems are given in \cite{StSc06,LuWaBr10}.
These are, however, based on splitting, and therefore not applicable for general Hamiltonians.

To find symplectic integrators on $(S^2)^n$ for general Hamiltonians is particularly challenging because
symplectic integrators for general Hamiltonians are closely related to the classical canonical \emph{generating functions} defined on \emph{symplectic vector spaces} (or in local canonical coordinates). 
Generating functions are a tool of vital importance in mechanics, used for perturbation theory, construction of orbits and of normal forms, in bifurcation theory, and elsewhere. 
They have retained their importance in the era of symplectic geometry and topology, being used to construct Lagrangian submanifolds and to count periodic orbits~\cite{We1981,Vi1994}. 
Although there are different types of generating function, all of them are restricted to cotangent bundle phase spaces.

In our case, the four `classical' generating functions, that treat the position and momentum differently, do not seem to be relevant given the symmetry of $S^2$.
Instead, our novel method (or generating function) is more related to the Poincaré generating function \cite[vol.~III, \S319]{Po1892} 
\begin{equation}
	J(\wplus-\wminus)=\nabla G\Big(\frac{\wplus+\wminus}{2}\Big), \qquad J = \begin{pmatrix}0 & I \cr -I & 0 \end{pmatrix},	
\end{equation}
which is equivariant with respect to the full affine group and which corresponds to the classical midpoint method when interpreted as a symplectic integrator.
The classical midpoint method on vector spaces is known to conserve quadratic invariants \cite{Co1987}, and hence automatically induces a map on $S^2$ when applied to spin systems.
However, it has long been known \emph{not} to be symplectic \cite{AuKrWa1993}. 

We now list the already known techniques to construct symplectic integrators for general Hamiltonian systems on a symplectic manifold $M$ that is not a vector space:
\begin{enumerate}
	\item If $M = T^*Q$ is the cotangent bundle of a submanifold $Q\subset \RR^{n}$ determined by level sets of $m$ functions $c_{1},\ldots,c_{m}$, then the family of \Rattle\ methods can be used~\cite{Ja1996}.
	More generally, if $M$ is a transverse submanifold of $\RR^{2n}$ defined by coisotropic constraints, then geometric \Rattle\ methods can be used~\cite{McMoVeWi2013}.

	\item If $M\subset \mathfrak{g}^{*}$ is a coadjoint orbit (symplectic leaf) of the dual of a Lie algebra $\mathfrak{g}$ corresponding to a Lie group $G$, \Rattle\ methods can again be used:
	first extend the symplectic system on $M$ to a Poisson system on $\mathfrak{g}^{*}$,
	then ``unreduce'' to a symplectic system on $T^*G$, then embed $G$ in a vector space and use strategy 1 above~\cite[\S\!~VII.5]{HaLuWa2006}.
	One can also use \emph{Lie group integrators} for the unreduced system on $T^*G$~\cite{ChSc1991,MaPeSh1999}.
	The discrete Lagrangian method, pioneered in this context by \citet{MoVe1991}, yields equivalent classes of methods. 
	The approach is very general, containing a number of choices, especially those of the embedding and the discrete Lagrangian. 
	For certain choices, in some cases, such as the free rigid body, the resulting discrete equations are completely integrable; this observation has been extensively developed~\cite{DeLiTo1992}.

	\item If $M\subset\mathfrak{g}^{*}$ is a coadjoint orbit and $\mathfrak{g}^{*}$ has a symplectic realisation on $\RR^{2n}$ obtained through a momentum map associated with a Hamiltonian action of $G$ on $\RR^{2n}$, then symplectic Runge--Kutta methods for \emph{collective Hamiltonian systems} (cf.~\cite{MaWe1983}) sometimes descend to symplectic methods on~$M$ (so far, the cases $\mathfrak{sl}(2)^{*},\su(n)^{*}$, $\so(n)^{*}$, and $\mathfrak{sp}(n)^{*}$ have been worked out).
	This approach leads to \emph{collective symplectic integrators}~\cite{McMoVe2013b}.
\end{enumerate}

Let us review these approaches for the case $M=S^{2}$.

The first approach is not applicable, since $S^{2}$ is not a cotangent bundle.


The second approach is possible, since $S^{2}$ is a coadjoint orbit of $\su(2)^{*}\simeq \RR^{3}$.
$\SU(2)$ can be embedded as a 3--sphere in $\RR^{4}$  using unit quaternions, which leads to methods that use $10$ variables, in the case of \Rattle\ ($8$ dynamical variables plus $2$ Lagrange multipliers), and $8$ variables, in the case of Lie group integrators.
Both of these methods are complicated; the first due to constraints and the second due to the exponential map and the need to solve nonlinear equations in auxiliary variables.

The third approach is investigated in~\cite{McMoVe2014c}.
It relies on a quadratic momentum map $\pi\colon T^*\RR^{2}\to \su(2)^{*}$ and integration of the system corresponding to the \emph{collective Hamiltonian} $H\circ \pi$ using a symplectic Runge--Kutta method.
This yields relatively simple integrators using $4$ variables.
They rely on an auxiliary structure (the suspension to $T^*\RR^{2}$ and the Poisson property of $\pi$) and requires solving nonlinear equations in auxiliary variables; although simple, they do not fully respect the simplicity of $S^2$.

Our spherical midpoint method, fully described in \autoref{sec:main_result}, is simpler than all of the known approaches above; it is as simple as the classical midpoint method on vector spaces.
We would like to emphasise, however, that symplecticity of our method is by no means related to the symplecticity of the classical midpoint method.
The existence of the spherical midpoint method is thus unexpected, and its symplecticity is surprisingly difficult to prove.

In \autoref{sec:example} we provide a series of detailed numerical examples for various spin systems.
Interestingly, the error constants for the spherical midpoint method appears to be significantly smaller than for the \Rattle\ method.


Finally, while the present study is phrased in the language of numerical integration, we wish to remind the reader of the strong relation to discrete time mechanics, a field studied for many reasons:
\begin{enumerate}[label=(\roman*)]
\item It has an immediate impact  in computational physics, where symplectic integrators are in widespread use and in many situations are overwhelmingly superior to standard numerical integration~\cite{McQu2006}. 
	\item As a generalisation of continuous mechanics, discrete geometric mechanics is in principle more involved: the nature of symmetries, integrals, and other geometric concepts is important to understand both in its own right and for its impact on numerical simulations~\cite{HaLuWa2006}. 
	\item Discretisation leads to interesting physics models, for example the extensively-studied Chirikov standard map~\cite{Ch2008}.
	\item Discrete models can also be directly relevant to intrinsically discrete situations, such as waves in crystal lattices.
Here, the appearance of new phenomena, not persisting at small or vanishing lattice spacing, is well known~\cite{Ka1993}.
\item The field of discrete integrability is undergoing rapid evolution, with many new examples, approaches, and connections to other branches of mathematics, e.g., special functions and representation theory~\cite{KoBaTa2004}.
\item A strand of research in physics, pioneered notably by \citet{Le1983}, develops the idea that time is fundamentally discrete, and it is the continuum models that are the approximation. 
\item Discrete models can contain ``more information and more symmetry than the corresponding differential equations''~\cite{Le1986}; this also occurs in discrete integrability~\cite{KoBaTa2004}. 
\end{enumerate}


\section{Main results} 
\label{sec:main_result}

We present our two methods, the spherical midpoint method, and the extended spherical midpoint method, and state their properties.

We use the following notation.
$\Xcal(M)$ denotes the space of smooth vector fields on a manifold $M$.
If $M$ is a Poisson manifold, and $H\in C^{\infty}(M)$ is a smooth function on $M$, then the corresponding Hamiltonian vector field is denoted $X_H$.
The Euclidean length of a vector $\w\in\RR^{d}$ is denoted $\abs{\w}$.
If $\w\in\RR^{3n}\simeq (\RR^{3})^n$, then $\w_i$ denotes the $i$:th component in $\RR^{3}$.

\subsection{Spherical Midpoint Method: Symplectic integrator on spheres}


Our paper is devoted to the following novel method.

\newcommand\mw{\paren{ \wminus + \wplus}}

\begin{definition}\label{def:spherical_midpoint}
	The \textbf{spherical midpoint method} for $\xi\in\Xcal\big((S^{2})^{n}\big)$ is the numerical integrator
	\begin{equation}\label{eq:area_midpoint}
		\Phi(h\xi)\colon (S^{2})^{n}\to (S^{2})^{n},
	\end{equation}
	obtained as a mapping $\wminus\to \wplus$, with $\wminus$, $\wplus$ in $(S^2)^n$, by
	\begin{equation}
		\label{eq:smidpoint}
		\wplus - \wminus = h \xi\paren[\Big]{\frac{\mw_1}{\abs{\mw_1}},\ldots,\frac{\mw_n}{\abs{\mw_n}}}.
	\end{equation}
\end{definition}

\begin{remark}
\label{rk:geodeticmp}
Note that, even for $n=1$, the spherical midpoint method is \emph{not} the geodesic midpoint method on the sphere.
Let $m(\wminus,\wplus)$ denote the geodesic midpoint of $\wminus$ and $\wplus$, and let $d(\wminus,\wplus)$ denote the geodesic (great-circle) distance between $\wminus$ and $\wplus$.
The \emph{geodesic} midpoint method is defined by the conditions that $\xi(m(\wminus,\wplus))$ is tangent to the geodesic between $\wminus$ and $\wplus$, and that ${d(\wminus,\wplus)} = \abs{h\xi(m(\wminus,\wplus))}$.
The spherical midpoint method~\eqref{eq:smidpoint} fulfills the first of these conditions, but not the second: $\xi(m(\wminus,\wplus))$ is tangent to the geodesic between $\wminus$ and $\wplus$, but $2\sin\paren[\big]{d(\wminus,\wplus)/2} = \abs{h\xi(m(\wminus,\wplus))}$.
\end{remark}

Recall now the definition of the classical midpoint method:
\begin{definition}\label{def:classical_midpoint}
	The \textbf{classical midpoint method} for discrete time approximation of the ordinary differential equation $\dot{\w} = X(\w)$, $X\in\Xcal(\RR^{d})$, is the mapping $\wminus \mapsto \wplus$ defined by
	\begin{equation}\label{eq:implicit_midpoint}
		\wplus-\wminus = h X\Big(\frac{\wplus+\wminus}{2}\Big),
	\end{equation}
	where $h>0$ is the time-step length.
\end{definition}

Define a projection map $\projS$ by
\begin{equation}\label{eq:projection_map_various_r}
	\projS(\w) = \paren[\Big]{\frac{\w_1}{\abs{\w_1}},\ldots,\frac{\w_n}{\abs{\w_n}}}.
\end{equation}

It is clear that the spherical midpoint method \eqref{eq:smidpoint} is obtained by 
defining the vector field given by
\begin{equation}\label{eq:relation_ham_vf}
	X(\w) \coloneqq \xi(\projS(\w)).
\end{equation}
and then use the classical midpoint method~\eqref{eq:implicit_midpoint} with the vector field $X$.
Notice that~$X$ is not defined whenever $\w_i=0$ for some~$i$.
In practice this is never a problem, since we are interested in vector fields preserving the spheres.


This indeed gives an integrator on $(S^2)^n$, since the classical midpoint method preserves quadratic invariants, and the vector field~\eqref{eq:relation_ham_vf} is tangent to the spheres (which are the level sets of quadratic functions on $\RR^{3n}$).

We now give the main result of the paper.

\begin{theorem}\label{thm:main}
	The spherical midpoint method \eqref{eq:area_midpoint} fulfils the following properties:
	\begin{enumerate}[label={\textup{(\roman*)}}]
		\item\label{it:symplectic} it is symplectic with respect to $\omega$ if $\xi$ is Hamiltonian with respect to $\omega$;

		\item\label{it:sndorder} it is second order accurate;

		\item\label{it:equi} it is equivariant with respect to $\big(\SO(3)\big)^n$ acting on $(S^{2})^{n}$, i.e.,
		\begin{equation}\label{eq:equivariance_main}
			\psi_{\vect g^{-1}} \circ \Phi(h\xi)\circ \psi_{\vect g} = \Phi(h\psi_{\vect g}^* \xi),
			\quad \forall\, \vect g = (g_1,\ldots,g_n)\in \big( \SO(3) \big)^{n},
		\end{equation}
		where $\psi_{\vect g}$ is the action map;
		

		\item\label{it:linpres} it preserves arbitrary linear symmetries, arbitrary linear integrals, and single-spin homogeneous quadratic integrals $\w_i^\top \vect{A} \w_i$;

		\item\label{it:sadjoint} it is self-adjoint and preserves arbitrary linear time-reversing symmetries;

		\item\label{it:linstable} it is linearly stable: for the linear ODE $\dot{\w} = \lambda \w\times\vect{a}$, 
		the method yields a rotation about the unit vector
		$\vect{a}$ by an angle $\cos^{-1}(1-\frac{1}{2}(\lambda h)^2)$ and hence is stable for $0\le \lambda h<2$. 
	\end{enumerate}
\end{theorem}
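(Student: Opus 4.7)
The plan is to exploit the reformulation, noted just before the theorem, of the spherical midpoint method as the classical midpoint method~\eqref{eq:implicit_midpoint} applied to the extended vector field $X\coloneqq\xi\circ\projS$ on $\RR^{3n}$. Items (ii)--(vi) reduce almost mechanically to well-known properties of the classical midpoint method, so I would dispatch them first and reserve the real effort for item (i).

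For (ii), second-order accuracy transfers from $\RR^{3n}$ to $(S^2)^n$ because on the invariant submanifold $\projS=\id$, so $X$ agrees with $\xi$ there and the truncation-error analysis of the classical midpoint applies verbatim. For (iii), $\projS$ is manifestly $\SO(3)^n$-equivariant (rotations preserve norms componentwise), so $\SO(3)^n$-equivariance of $\xi$ propagates to $X$; combined with the affine equivariance of the classical midpoint, this yields~\eqref{eq:equivariance_main}. For (iv), linear symmetries and linear integrals pull back trivially from $(S^2)^n$ to $\RR^{3n}$, while a homogeneous quadratic integral $\w_i^{\top}\vect{A}\w_i$ of $\xi$ extends to a quadratic integral of $X$ on all of $\RR^{3n}$: the chain rule gives $X\cdot\nabla(\w_i^{\top}\vect{A}\w_i)=|\w_i|\,(\vect{A}+\vect{A}^{\top})(\w_i/|\w_i|)\cdot\xi_i(\projS(\w))$, which vanishes since $\projS(\w)\in(S^2)^n$; the classical midpoint method then preserves it. Item (v) is immediate from the symmetry of~\eqref{eq:smidpoint} under the simultaneous swap $\wminus\leftrightarrow\wplus$ and $h\mapsto -h$. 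Item (vi) is a short direct computation: for $\xi(\w)=\lambda\,\w\times\vect{a}$, equation~\eqref{eq:smidpoint} becomes a linear system that can be solved explicitly, yielding the stated rotation and stability interval.

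Item (i), the symplecticity whenever $\xi$ is Hamiltonian, is the substantive content of the theorem. The key obstruction is that $X=\xi\circ\projS$ is \emph{not} Hamiltonian on $\RR^{3n}$ for any natural symplectic structure---indeed $\RR^{3n}$ is not symplectic in any natural way---so the classical midpoint's own symplecticity on symplectic vector spaces is of no direct use. My strategy would be to lift via the Hopf fibration: identify each $S^2$ with the Marsden--Weinstein reduction of $\CC^2$ (equivalently $\SUtwo\simeq S^3\subset\HH$) by the standard Hamiltonian $\Uone$-action, whose momentum map is precisely the defining embedding $S^2\hookrightarrow\RR^3$. I would then construct a $\Uone^n$-equivariant symplectic integrator on $(\CC^2)^n$ preserving the product momentum-map level set, and verify that its Hopf projection coincides with the spherical midpoint method; the standard symplectic reduction theorem would then deliver symplecticity on $(S^2)^n$. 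The main technical difficulty lies in producing this lifted integrator---essentially a non-canonical generating-function-like object in the complex variables that both respects the $\Uone^n$-symmetry and projects correctly under $\hopf$---since the classical canonical generating functions are not equivariant for this symmetry and cannot be imported wholesale.
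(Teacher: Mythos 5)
Your treatment of items (ii)--(vi) matches the paper's: the reformulation as the classical midpoint method applied to $X=\xi\circ\projS$ does all the work, and your reductions for accuracy, equivariance, preserved quantities and self-adjointness are the ones the paper uses. One small quibble on (vi): the equations for the test problem do \emph{not} become linear --- the normalisation $\abs{\wminus+\wplus}$ keeps them nonlinear --- but one knows a priori that the update is a rotation about $\vect a$ by some angle, and solving the resulting scalar nonlinear equation gives the stated angle. That is the paper's argument and your conclusion is unaffected.

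Item (i) is where your proposal has a genuine gap. You correctly identify that the classical midpoint method's symplecticity on symplectic vector spaces is of no direct use, and you propose to lift through the Hopf fibration, build a $\Uone^n$-equivariant symplectic integrator on $(\CC^2)^n$ whose reduction is the spherical midpoint method, and invoke symplectic reduction. But you then concede that constructing this lifted integrator and verifying that it projects correctly is ``the main technical difficulty'' --- and that construction \emph{is} the entire content of a proof along those lines. As written, the proposal for (i) is a plan, not a proof; nothing guarantees that a suitable lift exists, and the naive candidates (the classical midpoint method applied to the collective Hamiltonian $H\circ\hopf$) are known to give a \emph{different} method, the collective integrator of the authors' earlier work. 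The paper avoids the lift entirely and argues directly on $\RR^{3n}$ with the Lie--Poisson structure \eqref{eq:LiePoisson_bracket}: by \autoref{lem:xiXH} one may write $\xi\circ\projS=X_H$ for a Hamiltonian $H$ constant on rays, and by \autoref{lma:XHconstant} the field $X_H$ is then itself constant on rays. Since the Poisson bivector $K$ is \emph{linear} in $\w$, the defect $K(\wplus)(\aplus,\bplus)-K(\wminus)(\aminus,\bminus)$ of the classical midpoint map splits by trilinearity into three $\mathcal{O}(h)$ terms, which cancel because $\wdif=hX_H(\wmid)$ is an infinitesimal Poisson automorphism, plus the single $\mathcal{O}(h^3)$ term
\begin{equation}
	K(\wdif)(\adif,\bdif)=\sum_{i=1}^n\det\big([\,hX_H(\wmid)_i,\;-hDX_H(\wmid)^{\top}\amid\big|_i,\;-hDX_H(\wmid)^{\top}\bmid\big|_i\,]\big),
\end{equation}
which vanishes because all three columns are orthogonal to $\wmid_i$: the first by tangency to the spheres, the last two because $\langle DX_H(\wmid)\wmid,\cdot\rangle=0$, which is exactly constancy on rays. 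Poisson plus preservation of the coadjoint orbits then gives symplecticity on $(S^2)^n$. This is both more elementary and more self-contained than the reduction route; if you want to pursue your lift instead, you must actually exhibit the symplectic map upstairs and prove it descends to \eqref{eq:smidpoint}, which is a substantial piece of work in its own right.
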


\begin{proof}
	We use on several occasions the observation that the spherical midpoint method can be reformulated as the classical midpoint method applied to the vector field \eqref{eq:relation_ham_vf}, using the projection map $\projS$ defined in \eqref{eq:projection_map_various_r}.
	\begin{enumerate}
		\item[\ref{it:symplectic}] The proof is postponed to \autoref{sec:proof_1}.
		\item[\ref{it:sndorder}] The midpoint method is of order~2, and a solution to $\dot{\w} = X(\w)$ with $X$ given by~\eqref{eq:relation_ham_vf} is also a solution to $\dot{\w} = \xi(\w)$.
		\item[\ref{it:equi}] The map $\projS$ is equivariant with respect to $\big(\SO(3)\big)^n$, $\big(\SO(3)\big)^n$ is a subgroup of the affine group on $\RR^{3n}$ and the classical midpoint method is affine equivariant.
		\item[\ref{it:linpres}, \ref{it:sadjoint}] Direct calculations show that $X$ has the same properties in the given cases as the original vector field $\xi$, and the classical midpoint method is known to preserve these properties.
		\item[\ref{it:linstable}] The projection $\projS$ renders the equations for the method nonlinear, even for this linear test equation; it is clear that the solution is a rotation about $\vect{a}$ by \emph{some} angle; this yields a nonlinear equation for the angle with the given solution.
	\end{enumerate}
\end{proof}

\begin{remark}
	Note that the unconditional linear stability of the classical midpoint method is lost for the spherical midpoint method; the method's response to the harmonic oscillator is identical to that of the leapfrog (Störmer--Verlet) method.
\end{remark}

\begin{remark}
	The spherical midpoint method is second order accurate.
	Since it is also symmetric, one can use symmetric composition techniques, as described in~\cite[\S\!~V.3.2]{HaLuWa2006}, to obtain higher order symplectic integrators on $(S^2)^n$.
\end{remark}


\subsection{Spherical Midpoint Method: Lie--Poisson integrator} 
\label{sub:interpretation_as_lie_poisson_integrator}

$\RR^{3n}$ is a Lie--Poisson manifold with Poisson bracket
\begin{equation}\label{eq:LiePoisson_bracket}
	\{F,G \}(\w) = \sum_{k=1}^{n}\paren[\Big]{
		\frac{\pd F(\w)}{\pd \w_k}\times 
		\frac{\pd G(\w)}{\pd \w_k}
	}\cdot \w_k.
\end{equation}
This is the canonical Lie--Poisson structure of $(\so(3)^{*})^{n}$, or $(\su(2)^{*})^{n}$, obtained by identifying $\so(3)^{*}\simeq \RR^{3}$, or $\su(2)^{*}\simeq \RR^{3}$.
For details, see \cite[\S\!~10.7]{MaRa1999} or \cite{McMoVe2014c}.

The Hamiltonian vector field associated with a Hamiltonian function $H\colon \RR^{3n}\to \RR$ is given by
\begin{equation}\label{eq:Ham_vf}
	X_{H}(\w) = \sum_{k=1}^{n} \w_k\times \frac{\pd H(\w)}{\pd \w_k} .
\end{equation}
Its flow, $\exp(X_{H})$, preserves the Lie--Poisson structure, i.e.,
\begin{equation}\label{eq:preservation_of_LP}
	\{F\circ\exp(X_{H}),G\circ\exp(X_{H}) \} = \{F,G \}\circ\exp(X_{H}), \quad \forall F,G\in C^{\infty}(\RR^{3n}).
\end{equation}
The flow $\exp(X_{H})$ also preserves the \emph{coadjoint orbits}~\cite[\S\!~14]{MaRa1999}, given by
\begin{equation}\label{eq:coadjoint_orbits}
	S_{\lambda_1}^{2}\times\cdots\times S_{\lambda_n}^2\subset \RR^{3n},\quad \lambda_1,\ldots,\lambda_n \geq 0,  
\end{equation}
where $S^{2}_{\lambda}$ denotes the 2--sphere in $\RR^3$ of radius~$\lambda$.
A \emph{Lie--Poisson integrator} for $X_{H}$ is an integrator that, like the exact flow, preserves the Lie--Poisson structure and the coadjoint orbits.
For an illustration of the coadjoint orbits, see \autoref{fig:coadjoint}.

\begin{figure}
	\centering
	\input{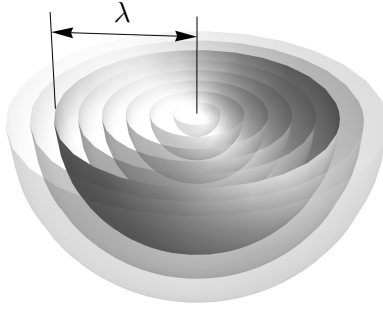}
	\caption{
	Structure of the Lie--Poisson manifold $(\RR^3,\{\cdot,\cdot\})$.
	Lie--Poisson manifolds are foliated by symplectic submanifolds \emph{(symplectic leaves)} given by the coadjoint orbits.
	For $\RR^3$ equipped with the Poisson bracket~(\ref{eq:LiePoisson_bracket}), the coadjoint orbits are given by the submanifolds $S^2_{\lambda}\subset \RR^{3}$.
	Thus, to construct a Lie--Poisson integrator on $\RR^{3n}$ is equivalent to constructing symplectic integrators for the symplectic direct product manifolds $S_{\lambda_1}^{2}\times\cdots\times S_{\lambda_n}^2$.}
	\label{fig:coadjoint}
\end{figure}

\begin{definition}\label{def:extended_spherical_midpoint}
	The \textbf{extended spherical midpoint method} for $X\in\Xcal(\RR^{3n})$ is the numerical integrator defined by
	\begin{equation}\label{eq:area_midpoint_LP}
		\wplus-\wminus = h X
	\paren[\bigg]{ \frac{\sqrt{\abs{\wminus_{1}}\abs{\wplus_{1}}}(\wminus_{1}+\wplus_{1})}{\abs{\wminus_{1}+\wplus_{1}} },\ldots, \frac{\sqrt{\abs{\wminus_{n}}\abs{\wplus_{n}}}(\wminus_{n}+\wplus_{n})}{\abs{\wminus_{n}+\wplus_{n}} } }
	.
	\end{equation}
\end{definition}
We define the expression $\frac{\sqrt{\abs{\wminus_{i}}\abs{\wplus_{i}}}(\wminus_{i}+\wplus_{i})}{\abs{\wminus_{i}+\wplus_{i}} }$ to be zero whenever the denominator is zero.
The equation~\eqref{eq:area_midpoint_LP} is thereby defined on all of $\RR^{3n}$.


We have the following result, analogous to~\autoref{thm:main}.


\begin{theorem}\label{thm:mainLP}
	The extended spherical midpoint method \eqref{eq:area_midpoint_LP} fulfils the following properties:
	\begin{enumerate}[label={\textup{(\roman*)}}]
		\item\label{it:liepoisson} it is a Lie--Poisson integrator for Hamiltonian vector fields $X_{H}\in \Xcal(\RR^{3n})$;
		\item\label{it:sndorder2} it is second order accurate;

		\item\label{it:equi2} it is equivariant with respect to $\big(\SO(3)\big)^n$ acting diagonally on $(\RR^{3})^n\simeq \RR^{3d}$ (the diagonal action is defined by $(g_1,\ldots,g_n) \cdot (\w_1,\ldots,\w_n) = (g_1\w_1,\ldots,g_n\w_n)$).
		

		\item\label{it:linpres2} it preserves arbitrary linear symmetries, arbitrary linear integrals, and single-spin homogeneous quadratic integrals $\w_i^\top \vect{A} \w_i$, where $\vect A\in \R^{3\times 3}$;

		\item\label{it:sadjoint2} it is self-adjoint and preserves arbitrary linear time-reversing symmetries;

	\end{enumerate}

\end{theorem}

\begin{proof}
	For convenience, we define
 $\Gamma\colon \RR^{3n}\times\RR^{3n}\to \RR^{3n}$ by
\begin{equation}\label{eq:Gamma_projection}
	\Gamma\big(\wminus,\wplus\big) \coloneqq
	\paren[\bigg]{ \frac{\sqrt{\abs{\wminus_{1}}\abs{\wplus_{1}}}(\wminus_{1}+\wplus_{1})}{\abs{\wminus_{1}+\wplus_{1}} },\ldots, \frac{\sqrt{\abs{\wminus_{n}}\abs{\wplus_{n}}}(\wminus_{n}+\wplus_{n})}{\abs{\wminus_{n}+\wplus_{n}} } }.
\end{equation}
	\begin{enumerate}
		\item[\ref{it:liepoisson}] The proof is postponed to \autoref{sec:proof_1}.

		\item[\ref{it:sndorder2}] First notice that
		\begin{equation}\label{eq:proof_Gamma_midpoint}
			\Gamma(\wminus,\wplus) = \frac{\wminus+\wplus}{2} + \mathcal{O}(\abs{\wplus-\wminus})
		\end{equation}
		Using \eqref{eq:proof_Gamma_midpoint} in~\eqref{eq:area_midpoint_LP}, and using that $X$ is smooth, we obtain
		\begin{equation}
			\wplus-\wminus = h X\Big(\frac{\wminus+\wplus}{2} \Big) + h\mathcal{O}(\abs{\wplus-\wminus}).
		\end{equation}
		We use~\eqref{eq:area_midpoint_LP} again to obtain
		\begin{equation}
			\wplus-\wminus = h X\Big(\frac{\wminus+\wplus}{2} \Big) + h^{2}\mathcal{O}\big(\big| X\big( \Gamma(\wminus,\wplus) \big)  \big|\big).
		\end{equation}
		Since $\Gamma(\wminus,\wplus)$ is bounded for fixed $\wminus$, we get $\wplus=\tilde{\wplus} + \mathcal{O}(h^{2})$, where $\tilde{\wplus}$ is the solution obtained by the classical midpoint method~\eqref{eq:implicit_midpoint} on $\RR^{3n}$.
		The method defined by~\eqref{eq:area_midpoint_LP} is therefore at least first order accurate.
		Second order accuracy follows since the method is symmetric.

		\item[\ref{it:equi2}]  $\Gamma$ is equivariant with respect to $(\SO(3))^n$, so we obtain $\SO(3)$ equivariance of the method.

		\item[\ref{it:linpres2}, \ref{it:sadjoint2}] Same proof as in \autoref{thm:main}.
	\end{enumerate}
\end{proof}

\subsection{Proof of symplecticity} 
\label{sec:proof_1}

We need some preliminary definitions and results before the main proof.

\begin{definition}\label{def:rays}
	The \textbf{ray} through a point $\w\in \RR^{3n}$ is the subset 
	\begin{equation}
		\{ (\lambda_1\w_1,\ldots,\lambda_n\w_n); \vect\lambda\in \RR_+^{n} \}.
	\end{equation}
\end{definition}

The set of all rays is in one-to-one relation with $(S^2)^n$.
Note that the vector field~$X$ defined by~\eqref{eq:relation_ham_vf} is \emph{constant on rays}.
The following result, essential throughout the remainder of the paper, shows that the property of being constant on rays is passed on from Hamiltonian functions to Hamiltonian vector fields.

\begin{lemma}
	\label{lma:XHconstant}
	If a Hamiltonian function $H\in C^{\infty}((\RR^{3}\backslash\{0\})^n)$ is constant on rays, then so is its Hamiltonian vector field $X_H$.
\end{lemma}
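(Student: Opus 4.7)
The plan is to combine two elementary ingredients: first, differentiate the ray-invariance of $H$ componentwise to see how each partial derivative $\partial H/\partial \w_k$ transforms along a ray; and second, plug that transformation rule into the Lie--Poisson formula~\eqref{eq:Ham_vf} for $X_H$, where the offsetting factor from the cross product $\w_k \times (\cdot)$ cancels it exactly.

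More concretely, I would first fix $\vect\lambda \in \RR_+^n$ and differentiate the identity $H(\lambda_1 \w_1,\ldots,\lambda_n \w_n) = H(\w)$ with respect to the block $\w_k$ by the chain rule. This yields
\begin{equation}
\lambda_k \, \frac{\pd H}{\pd \w_k}(\lambda_1\w_1,\ldots,\lambda_n\w_n) = \frac{\pd H}{\pd \w_k}(\w),
\end{equation}
so each block of the gradient is homogeneous of degree $-1$ in its own scaling factor. Substituting this into~\eqref{eq:Ham_vf} gives
\begin{equation}
X_H(\lambda_1\w_1,\ldots,\lambda_n\w_n)_k = (\lambda_k \w_k) \times \lambda_k^{-1}\frac{\pd H}{\pd \w_k}(\w) = \w_k \times \frac{\pd H}{\pd \w_k}(\w) = X_H(\w)_k,
\end{equation}
componentwise for each $k$, which is exactly ray-invariance of $X_H$.

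There is essentially no obstacle here; the only thing worth emphasising is why the argument works for the Lie--Poisson bracket~\eqref{eq:LiePoisson_bracket} rather than a canonical one. The key is the structural feature that the Poisson tensor carries an extra factor of $\w_k$ in the $k$-th block, precisely matching the $\lambda_k^{-1}$ produced by differentiating along the ray. In a canonical symplectic structure ray-invariance of $H$ would not pass to $X_H$; it is the linear (homogeneous of degree $+1$) nature of the Lie--Poisson structure on $(\so(3)^*)^n$ that makes the two scalings cancel.
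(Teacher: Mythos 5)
Your argument is correct and is essentially the paper's own proof: differentiate the ray-invariance identity to get that each gradient block is homogeneous of degree $-1$, then observe that the factor $\lambda_k$ from $\w_k\mapsto\lambda_k\w_k$ in the cross product cancels it exactly (the paper simply does the $n=1$ case and notes the general case is identical). Your closing remark about the cancellation being a feature of the degree-$+1$ homogeneity of the Lie--Poisson tensor is a nice observation but not needed for the proof.
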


\begin{proof}
	It is enough to consider $n=1$, as the general case proceeds the same way.
	$H$ is constant on rays, so for $\lambda>0$, we have
	\begin{align}
		H(\lambda\w) = H(\w).
	\end{align}
	Differentiating with respect to $\bw$ yields
	\begin{align}
		\lambda \nabla H(\lambda \w) = \nabla H (\w)
	\end{align}
	The Hamiltonian vector field at $\lambda \bw$ is
	\begin{equation}
		\begin{split}
			X_H(\lambda \w) &= \lambda \w \times \nabla H(\lambda \w) \\
			&= \w \times \nabla H(\w)\\
			&= X_H(\w),	
		\end{split}
	\end{equation}
	which proves the result.
\end{proof}

Recall that if $X$ is any vector field on $\RR^n$, then tangent vectors $\u (t)$ to integral curves $\w(t)$
of $X$ obey the variational equation $\dot \u  = DX(\w(t))\u $, where $\u \in T_{\w(T)}\RR^n$. The following
lemma establishes the equivalent result for transport of 1-forms. We represent the
1-form $\sum_{i=1}^n \sigma_i dw_i\in T_{\w}^*\RR^n$ by the column vector $\sigma$. 

\begin{lemma}
\label{lma:1forms}
Let  $\varphi(t)$ the flow of the vector field $X$ on $\RR^n$ and $\w(t)$ an integral curve. Let $\sigma(t)$ be
a curve of 1-forms transported by the flow, i.e., such that $\varphi(t)^*\sigma(t) = \sigma(0)$. Then $\dot\sigma = -DX(\w(t))^\top\sigma$.
\end{lemma}

\begin{proof}
For all $\u \in T_{\w(0)}\RR^n$ we have
$
\langle \varphi(t)^* \sigma(t),\u \rangle = \langle \sigma(t),D\varphi(t) \u \rangle,
$
so that $\sigma(0)^\top \u  = \sigma(t)^\top D\varphi(t) \u $ or $\sigma(0) = D\varphi(t)^\top \sigma(t)$.
Differentiating with respect to $t$ at $t=0$ and using $D\varphi(0) = I$, $\dot\varphi(0) = X$
gives the result.
\end{proof}

Any Poisson bracket on a manifold $M$ is associated with a Poisson bivector $K$, a section of $\bigwedge^2(TM)$, such that $\{F,G\}(\w) = K(\w)\paren[\big]{{\mathrm{d}}F(\w),{\mathrm{d}}G(\w)}$. The flow of a Hamiltonian vector field preserves the Poisson structure (see, e.g., \cite{MaRa1999}, Prop. 10.3.1), which in terms of $K$ is the statement
that $\frac{d}{dt}K(\w(t)(\sigma(t),\lambda(t))$=0. In the Lie--Poisson case, $K$ is linear in $\w$, so using the product rule together with linearity in each of the 3 arguments gives
\begin{equation}
\label{eq:3terms}
K(\dot \w)(\sigma,\lambda) + K(\w)(\dot\sigma,\lambda) + K(\w)(\sigma,\dot\lambda) = 0
\end{equation}
where $\dot\w = X_H(\w)$ and from Lemma \ref{lma:1forms}, $\dot\sigma = - (DX_H)^\top \sigma$ and $\dot\lambda = -(DX_H)^\top\lambda$.

\newcommand\liealg{\mathfrak{g}}
\newcommand\Lie[2]{\mathcal{L}_{#1}{#2}}

\begin{lemma}\label{lem:poisson_midpoint}
	Let $H\in C^\infty((\RR^{3}\backslash\{0\})^n)$ be constant on rays, and let $X :=X_H$ denote its Hamiltonian vector field.
	Then the classical midpoint method (\autoref{def:classical_midpoint}) applied to $X$ is a Lie--Poisson integrator. 
\end{lemma}

\begin{proof}
From \autoref{lma:XHconstant}, the Hamiltonian vector field $X$ is constant on rays. 

In addition, $X$ is tangent to the coadjoint orbits, which are the level sets of the quadratics $|\vect w_1|^2,\dots,|\vect w_n|^2$, so the classical midpoint method applied to $X$ preserves the coadjoint orbits. 
We will show that it is also a Poisson map with respect to the Poisson bracket~\eqref{eq:LiePoisson_bracket}.

In terms of the Poisson bivector $K$, to establish that a map $\varphi\colon\wminus\mapsto \wplus$ is Poisson is equivalent to showing that $K$ is preserved, i.e., that $K(\wplus)(\aplus,\bplus) = K(\wminus)(\aminus,\bminus)$ for all 1-forms $\aplus,\bplus\in T_{\wplus}^* M$, where $\aminus=\varphi^* \aplus$ and $\bminus=\varphi^* \bplus$.
Let $\wmid \coloneqq (\wminus+\wplus)/2$ and $\wdif \coloneqq \wplus-\wminus$. Then the classical midpoint
method applied to $X$ takes the form $\wdif = h X(\wmid)$. 
Therefore, introducing
$\adif \coloneqq \aplus - \aminus$ and $ \amid \coloneqq \frac{1}{2}(\aminus+\aplus)$, we have
$\adif =  - h DX(\wmid)^\top \amid$ and 
similarly 
$\bdif \coloneqq \bplus - \bminus$, $ \bmid \coloneqq \frac{1}{2}(\bminus+\bplus)$,
and
$\bdif =  - h DX(\wmid)^\top \bmid$.

%

In the Lie--Poisson case~\eqref{eq:LiePoisson_bracket}, $K(\w)$ is linear in $\w$ and so linearity in all three arguments gives after cancellations:
\begin{multline}
	K(\wplus)(\aplus,\bplus) - K(\wminus)(\aminus,\bminus) =\\
	\underbrace{K(\wdif)(\adif,\bdif)}_{\Delta_1} 
	+ \underbrace{K(\wdif)(\amid,\bmid)+K(\wmid)(\adif,\bmid)+K(\wmid)(\amid,\bdif)}_{\Delta_2}.
\end{multline}

The term $\Delta_2$ vanishes because the 3 terms are precisely those appearing in \eqref{eq:3terms}.
(In fact, $\Delta_2=0$ for the classical midpoint method applied to {\em any} Lie--Poisson system, essentially because $\wdif$ is a Poisson vector field evaluated at $\wmid$.)

We now look at the term $\Delta_1$.
For the Poisson structure~\eqref{eq:LiePoisson_bracket}, $K(\w)(\aneutral,\bneutral) = \sum_{i=1}^n\det([\vect w_i,\aneutral_i,\bneutral_i])$. 
Therefore
\begin{align}
	K(\wdif)(\adif,\bdif) 
	&= \sum_{i=1}^n \det([\wdif_i,\adif_i,\bdif_i]) \\
	&= h^3 \sum_{i=1}^n \det([\f(\wmid)_i, (-D\f(\wmid)^{\top}\amid)_i,(-D\f(\wmid)^{\top}\bmid)_i])\\
	&=0 
\end{align}
because $\f(\wmid)_i$, $(-D\f(\wmid)^{\top}\amid)_i$, and $(-D\f(\wmid)^{\top}\bmid)_i$ are all 
orthogonal to $\vect w_i$: $\f(\wmid)_i$, because it is tangent to the 2-spheres, and $(-D\f(\wmid)^{\top}\amid)_i$ and $(-D\f(\wmid)^{\top}\bmid)_i$, because $\langle \wmid_i, (-D\f(\wmid)^{\top}\amid)_i\rangle = -\langle (D\f(\wmid)\wmid)_i,\amid_i\rangle$, which is zero because $\f$ is constant on rays.
We have shown that the classical midpoint method applied to $X$ is Poisson and preserves the symplectic leaves, thus it is symplectic on them. 
This establishes the result.
\end{proof}

\begin{proof}[Proof of \autoref{thm:main}-\ref{it:symplectic}.]
	The symplectic form $\tilde\omega$ on $S^2_{\lambda_1}\times\cdots\times S^2_{\lambda_n}$ induced by the Lie--Poisson structure on $\RR^{3n}$ is given by
	\begin{equation}
		\tilde\omega_{\w}(\vect{u},\vect{v}) = \sum_{i=1}^n \vect{u}_i\times\vect{v}_i\cdot \w_i.
	\end{equation}
	Likewise, the symplectic structure $\omega$ on $(S^2)^n$ given by~\eqref{eq:symp_form} can be written
	\begin{equation}
		\omega_{\w}(\vect{u},\vect{v}) = \sum_{i=1}^n \kappa_i\vect{u}_i\times\vect{v}_i\cdot \w_i.
	\end{equation}
	The mapping $\Phi\colon((S^2)^n,\omega)\to (S^2_{\kappa_1}\times\cdots\times S^2_{\kappa_n},\tilde\omega)$ given by $\w_i \mapsto \kappa_i \w_i$ is therefore a symplectomorphism (a symplectic diffeomorphism).
	Thus, the spherical midpoint method~\eqref{eq:area_midpoint} is symplectic on $((S^2)^n,\omega)$ if and only if it is symplectic on $(S^2_{\kappa_1}\times\cdots\times S^2_{\kappa_n},\tilde\omega)$ when represented in the variables $\tilde{\wminus} = \Phi(\wminus)$ and $\tilde{\wplus}=\Phi(\wplus)$.
	Let $H$ be the Hamiltonian function corresponding to a Hamiltonian vector field $\xi$ on $(S^2)^n$.
	Let $\bar H\in C^{\infty}((\RR^3\backslash\{0\})^n)$ be the extension to a ray-constant Hamiltonian. 
	A short calculation shows that the spherical midpoint method~\eqref{eq:smidpoint} for the Hamiltonian vector field $\xi$, but expressed in the variables $\tilde{\wminus}$ and $\tilde{\wplus}$, can be written
	\begin{equation}\label{eq:tildeeq}
		\tilde{\wplus}-\tilde{\wminus} = h X_{\bar H}\left(\rho\left(\frac{\tilde{\wplus}+\tilde{\wminus}}{2}\right)\right).
	\end{equation}
	Since $\bar H$ is constant on rays, it follows from~\autoref{lma:XHconstant} that $X_{\bar H}$ is constant of rays.
	Therefore, $X_{\bar H}\circ\rho = X_{\bar H}$.
	It follows follows from~\autoref{lem:poisson_midpoint} that $\tilde{\wminus}\mapsto \tilde{\wplus}$ defined by~\eqref{eq:tildeeq} is a symplectic mapping with respect to $\tilde\omega$.
	This proves the result.
\end{proof}

\begin{proof}[Proof of \autoref{thm:mainLP}-\ref{it:liepoisson}.]
	We need to prove that the method $\wminus\mapsto\wplus$ defined by~\eqref{eq:area_midpoint_LP} with $X=X_H$ is a Lie--Poisson map that preserves the coadjoint orbits.
	Equivalent is to prove that if $\wminus \in S^2_{\lambda_1}\times\cdots\times S^2_{\lambda_n}$, with $\lambda_i\geq 0$, then $\wminus\mapsto\wplus$ is a symplectic mapping $S^2_{\lambda_1}\times\cdots\times S^2_{\lambda_n}\to S^2_{\lambda_1}\times\cdots\times S^2_{\lambda_n}$ (with respect to the symplectic structure on $S^2_{\lambda_1}\times\cdots\times S^2_{\lambda_n}$ induced by the Lie--Poisson structure of $\RR^3$).
	If $\lambda_k=0$ for some $k$, i.e., $\wminus_k=0$, then $X_H(\wminus)_k = 0$ and if follows from~\eqref{eq:area_midpoint_LP} that $\wplus_k=0$.
	Thus, the variables $\wminus_k$ and $\wplus_k$ are constants that do not affect the dynamics (they can be removed from phase space).
	It is therefore no restriction to assume that $\lambda_i>0$ for all~$i$.
	Now define a Hamiltonian function $\bar H$ on $(\RR^3\backslash\{0\})^n$ by extending $H|_{S^2_{\lambda_1}\times\cdots\times S^2_{\lambda_n}}$ to be constant on the rays.
	By~\autoref{lem:poisson_midpoint}, the classical midpoint method applied to $X_{\bar H}$ is a Lie--Poisson integrator.
	In particular, it defines a symplectic map $\varphi_h\colon S^2_{\lambda_1}\times\cdots\times S^2_{\lambda_n}\to S^2_{\lambda_1}\times\cdots\times S^2_{\lambda_n}$.
	If $\wplus \coloneqq \varphi_h(\wminus)$, then $\wminus$ and $\wplus$ fulfill equation~\eqref{eq:area_midpoint_LP} with $X=X_H$, since $\abs{\wminus_i}=\abs{\wplus_i}=\lambda_i$ and $X_H|_{S^2_{\lambda_1}\times\cdots\times S^2_{\lambda_n}} = X_{\bar H}|_{S^2_{\lambda_1}\times\cdots\times S^2_{\lambda_n}}$.
	This proves the result.
\end{proof}

\section{Examples} 
\label{sec:example}

\subsection{Single particle system: free rigid body} 
\label{sub:single_particle_system_standard_rigid_body}

\begin{figure}
	\centering
	\includegraphics[width=.45\textwidth]{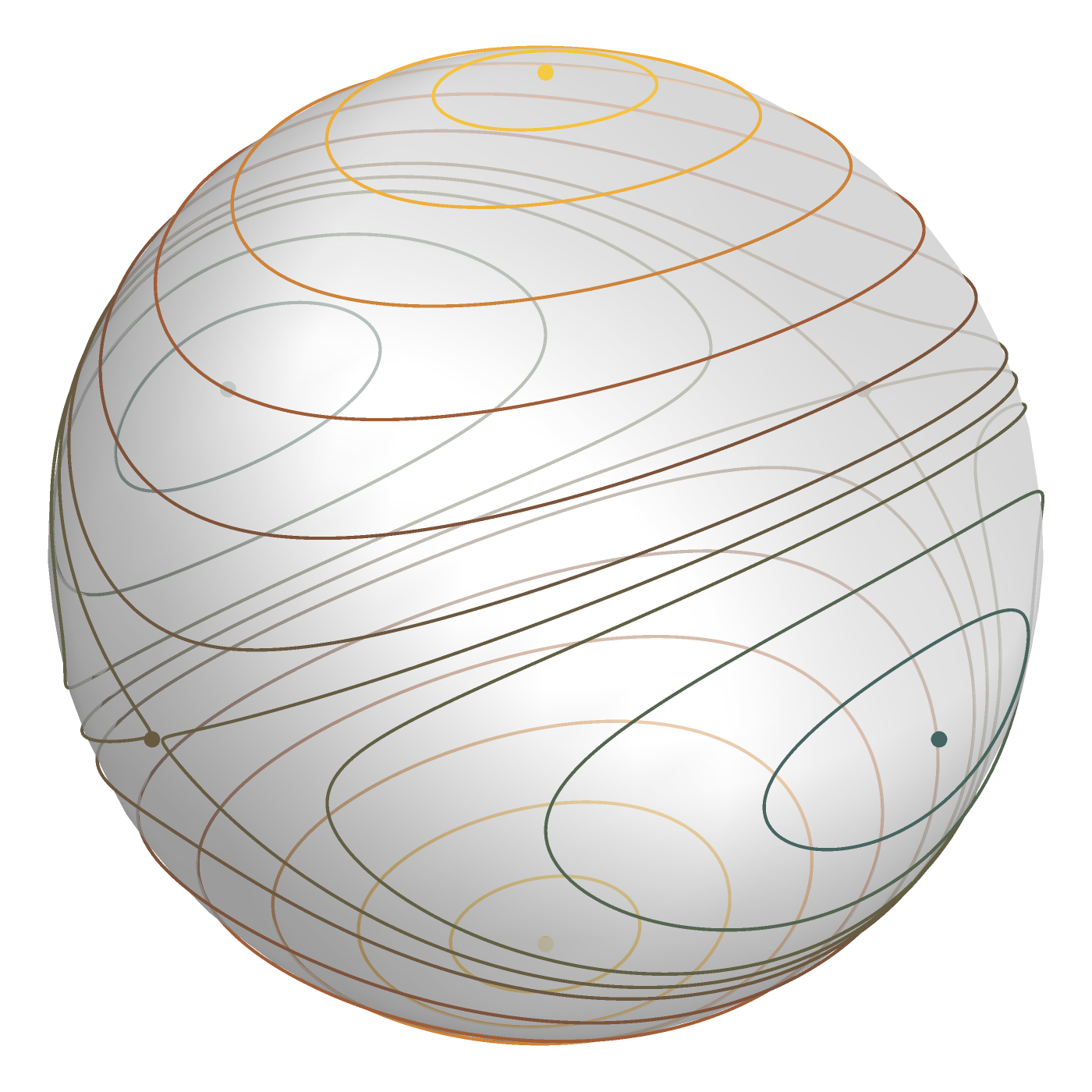}
	\caption{
		Phase portrait for the free rigid body problem with Hamiltonian~(\ref{eq:Ham_symrb}).
		The system has relative equilibria at the poles of the principal axes.
		The phase portrait is invariant under central inversions due to time-reversal symmetry 
		$H(\vect w) = H(-\vect w)$ of the Hamiltonian.
	}
	\label{fig_symrb_phaseplot}
\end{figure}

\begin{figure}
	\input{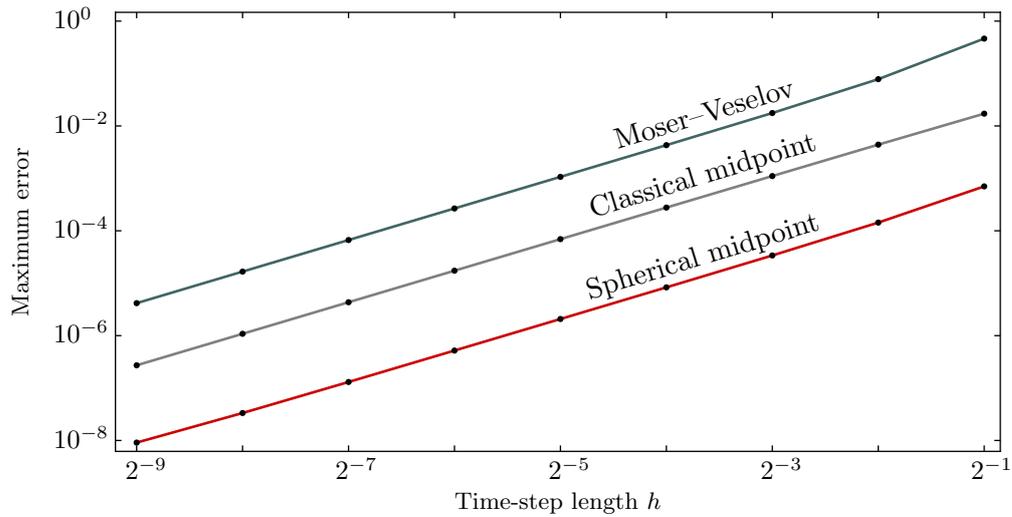}
	\caption{
		Errors $\max_{k} \abs{\vect{w}_k-\vect{w}(h k)}$  at different time-step lengths~$h$, for three different approximations of the free rigid body.
		The time interval is $0\leq hk \leq 10$ and the initial data are $\vect w_0 = (\cos(1.1),0,\sin(1.1))$.
		The errors for the spherical midpoint method are about 400 times smaller than the corresponding errors for the discrete Moser--Veselov algorithm and about 30 times smaller than the corresponding errors for the classical midpoint method.
	}
	\label{fig_symrb_error_plot}
\end{figure}
Consider a single particle system on $S^2$ with Hamiltonian
\begin{equation}\label{eq:Ham_symrb}
	H(\w) = \frac{1}{2}\w\cdot \vect{I}^{-1}\w,
\end{equation}
where $\vect{I}$ is an inertia tensor, given by
\begin{equation}
	\vect{I} = \begin{pmatrix}
		I_1 & 0 & 0 \\
		0 & I_2 & 0 \\
		0 & 0 & I_3 \\
	\end{pmatrix}
	,\quad
	I_1 = 1,\; I_2 = 2,\; I_3 = 4.
\end{equation}
This system describes a free rigid body.
Its phase portrait is given in \autoref{fig_symrb_phaseplot}.
The poles of the principal axes are relative equilibria, and every trajectory is periodic (as expected for 2--dimensional Hamiltonian systems).
Also note the time-reversal symmetry $\w\mapsto -\w$.

We consider three different discrete approximations: the discrete Moser--Veslov algorithm~\cite{MoVe1991}, the classical midpoint method~\eqref{eq:implicit_midpoint}, and the spherical midpoint method~\eqref{eq:area_midpoint}.
All these methods exactly preserve the Hamiltonian~\eqref{eq:Ham_symrb}, so each discrete trajectory lies on a single trajectory of the continuous system: if $\w_0,\w_1,\w_2,\ldots$ is a discrete trajectory, and $\w(t)$ is the continuous trajectory that fulfils $\w(0) = \w_0$, then $\w_k\in \w(\RR)$.
There are, however, phase errors: if $\w_0,\w_1,\w_2,\ldots$ is a discrete trajectory with time-step length $h$, and $\w(t)$ is the continuous trajectory that fulfils $\w(0) = \w_0$, then  $e_k \coloneqq \abs{\w_k-\w(hk)} \neq 0$ (in general).
The maximum error in the time interval $t\in [0,10]$ for the three methods, with initial data $\vect w_0 = (\cos(1.1),0,\sin(1.1))$ and various time-step lengths, is given in \autoref{fig_symrb_error_plot}.
The spherical midpoint method produce errors about 400 times smaller than errors for the discrete Moser--Veselov algorithm, and about 30 times smaller than errors for the classical midpoint method.

The discrete model of the free rigid body obtained by the spherical midpoint discretisation is \emph{discrete integrable} (c.f.~\cite{MoVe1991}), i.e., it is a symplectic mapping $S^2\to S^2$ with an invariant function (or, equivalently, it is a Poisson mapping $\RR^3\to \RR^3$ with two invariant functions that are in involution).
An interesting future topic is to attempt to generalise this integrable mapping to higher dimensions, and to characterise its integrability in terms of Lax pairs.
For the Moser--Veselov algorithm, such studies have led to a rich mathematical theory~\cite{DeLiTo1992}.

\subsection{Single particle system: irreversible rigid body} 
\label{sub:1_particle}


Consider a single particle system on $S^2$ with Hamiltonian
\begin{equation}\label{eq:Ham_nonsymrb}
	H(\w) = \frac{1}{2}\w\cdot \vect{I}(\w)^{-1}\w,
\end{equation}
where $\vect{I}(\w)$ is an irreversible inertia tensor, given by
\begin{equation}\label{eq:noniso_inertia}
	\vect{I}(\w) = \begin{pmatrix}
		\frac{I_1}{1 + \sigma w_1} & 0 & 0 \\
		0 & \frac{I_2}{1 + \sigma w_2} & 0 \\
		0 & 0 & \frac{I_3}{1 + \sigma w_3} \\
	\end{pmatrix}
	,\quad
	I_1 = 1,\; I_2 = 2,\; I_3 = 4,\; \sigma = \frac{2}{3}.
\end{equation}
This system describes an irreversible rigid body with fixed unitary total angular momentum.
It is irreversible in the sense that the moments of inertia about the principal axes depend on the rotation direction, i.e., the moments for clockwise and anti-clockwise rotations are different.
A phase portrait is given in \autoref{fig_nonsymrb}\subref{fig_nonsymrb_phaseplot}.
Like the free rigid body, the poles of the principal axes are relative equilibria, and every trajectory is periodic.
Contrary to the free rigid body, the phase portrait is not symmetric under central inversions, i.e., there is no apparent time-reversal symmetry.

We consider two different discrete approximations: the classical midpoint method~\eqref{eq:implicit_midpoint} and the spherical midpoint method~\eqref{eq:area_midpoint}.
Locally the two methods are akin (they are both second order accurate), but they exhibit distinct global properties: trajectories lie on periodic curves for the spherical midpoint method but not for the classical midpoint method; see~\autoref{fig_nonsymrb}\subref{fig_nonsymrb_traj_drift}.
Also, the deviation in the Hamiltonian~\eqref{eq:Ham_nonsymrb} along discrete trajectories remains bounded for the spherical midpoint method, but  drifts for the classical midpoint method; see~\autoref{fig_nonsymrb}\subref{fig_nonsymrb_energy_drift}.

Periodicity of phase trajectories and near conservation of energy, as displayed for the spherical midpoint method, suggests the presence of a first integral, a \emph{modified Hamiltonian}, that is exactly preserved.
The existence of such a modified Hamiltonian hinges on symplecticity, as established through the theory of
\emph{backward error analysis}~\cite{HaLuWa2006}.

The example in this section illustrates the advantage of the spherical midpoint method, over the classical midpoint method, for approximating Hamiltonian dynamics on~$S^2$.
In general, one can expect that spherical midpoint discretisations of continuous integrable systems on $(S^{2})^{n}$ remain \emph{almost integrable} in the sense of Kolmogorov--Arnold--Moser theory for symplectic maps, as developed by \citet{Sh1999b}.


\begin{figure}
	\centering
	\subfloat[Phase portrait]{\label{fig_nonsymrb_phaseplot}\includegraphics[width=.45\textwidth]{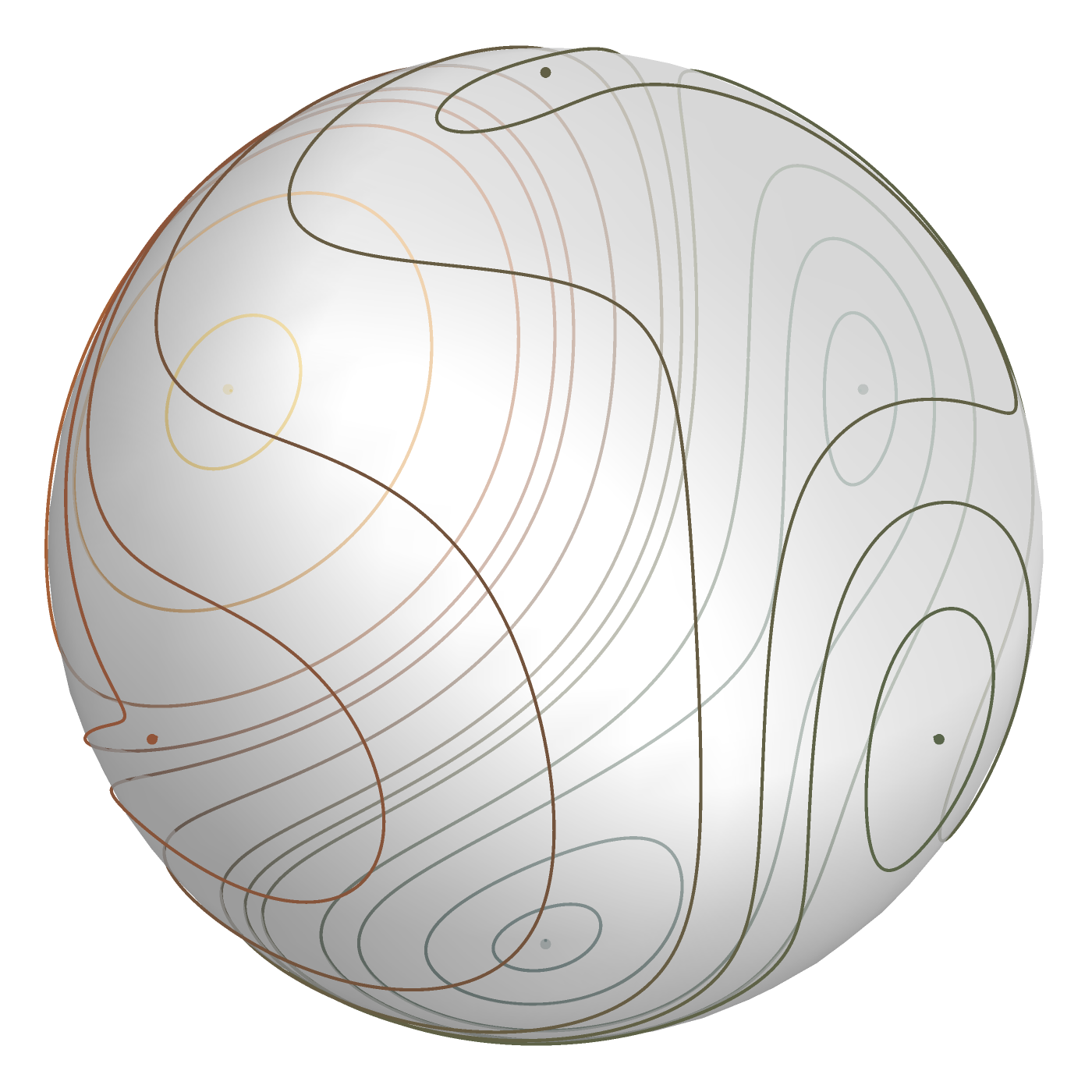}}
	\quad\quad
	\subfloat[Computed trajectories]{\label{fig_nonsymrb_traj_drift}\includegraphics[width=.45\textwidth]{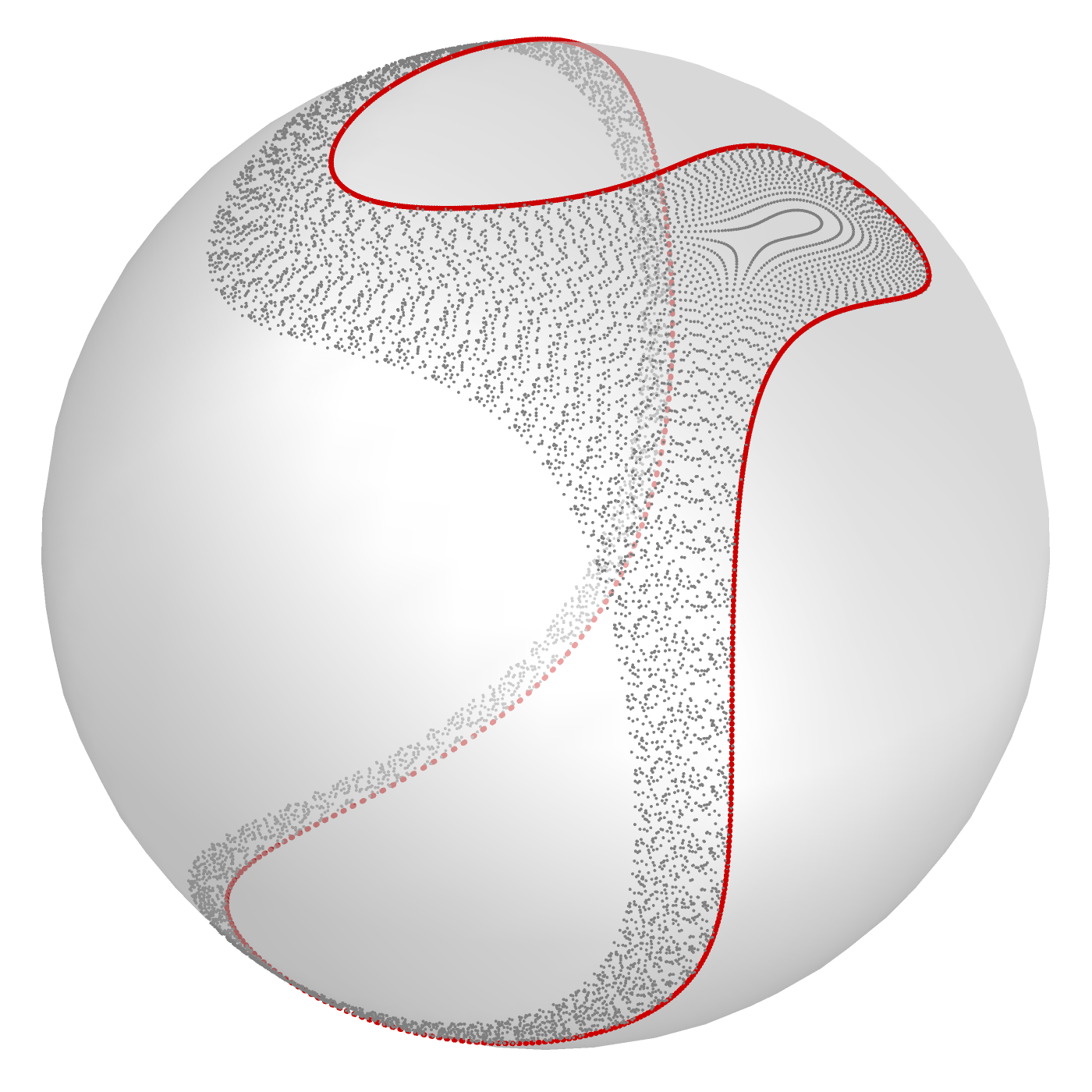}} \\
	\subfloat[Energy errors]{\label{fig_nonsymrb_energy_drift}\input{fig_nonsymrb_energy_drift}}
	\caption{
		The irreversible rigid body problem with Hamiltonian~(\ref{eq:Ham_nonsymrb}) is approximated by the classical midpoint method and the spherical midpoint method.
		The time step used is $h=1/2$.
		\protect\subref{fig_nonsymrb_phaseplot} Phase portrait obtained using the spherical midpoint method.
		The system has the same relative equilibria as the classical rigid body, but has no affine reversing symmetry.
		\protect\subref{fig_nonsymrb_traj_drift} Two corresponding discrete trajectories: the classical midpoint method (grey) and the spherical midpoint method (red).
		The initial data are $\vect w_0 = (0,0.7248,-0.6889)$.
		The trajectory obtained with the spherical midpoint lies on a smooth closed curve. 
		\protect\subref{fig_nonsymrb_energy_drift} Energy error $H(\vect{w}_k)-H(\vect{w}_0)$ for a two discrete trajectories.
		The energy drifts for the classical midpoint method, but remains bounded for the spherical midpoint method.
	}
	\label{fig_nonsymrb}
\end{figure}


\subsection{Single particle system: forced rigid body, development of chaos} 
\label{sub:forced_1_particle_system_chaotic_behaviour}

Consider the time dependent Hamiltonian on $S^2$ given by
\begin{equation}\label{eq:forced_rb_Ham}
	H(\w,t) = \frac{1}{2}\w\cdot \vect{I}^{-1}\w + \varepsilon\sin(t)w_3,\quad \w = (w_1,w_2,w_3),
\end{equation}
where $\vect{I}$ is an inertia tensor, given by
\begin{equation}
	\vect{I} = \begin{pmatrix}
		I_1 & 0 & 0 \\
		0 & I_2 & 0 \\
		0 & 0 & I_3 \\
	\end{pmatrix}
	,\quad
	I_1 = 1,\; I_2 = 4/3,\; I_3 = 2.
\end{equation}
This system describes a forced rigid body with periodic loading of period $2\pi$.
At $\varepsilon=0$ the system is integrable, but it becomes non-integrable as $\varepsilon$ increases.
We discretise the system using the spherical midpoint method with time-step length $2\pi/N$, $N=20$.
A Poincaré section is obtain by sampling the system every $N$:th step; the result for various initial data and choices of $\varepsilon$ is shown in \autoref{forced_rb_fig}.
Notice the development of chaotic behaviour near the unstable equilibria points.

The example in this section illustrates that the spherical midpoint method, being symplectic, behaves as expected in the transition from integrable to chaotic dynamics.

\begin{figure}
	\centering
	\includegraphics[width=.4\textwidth]{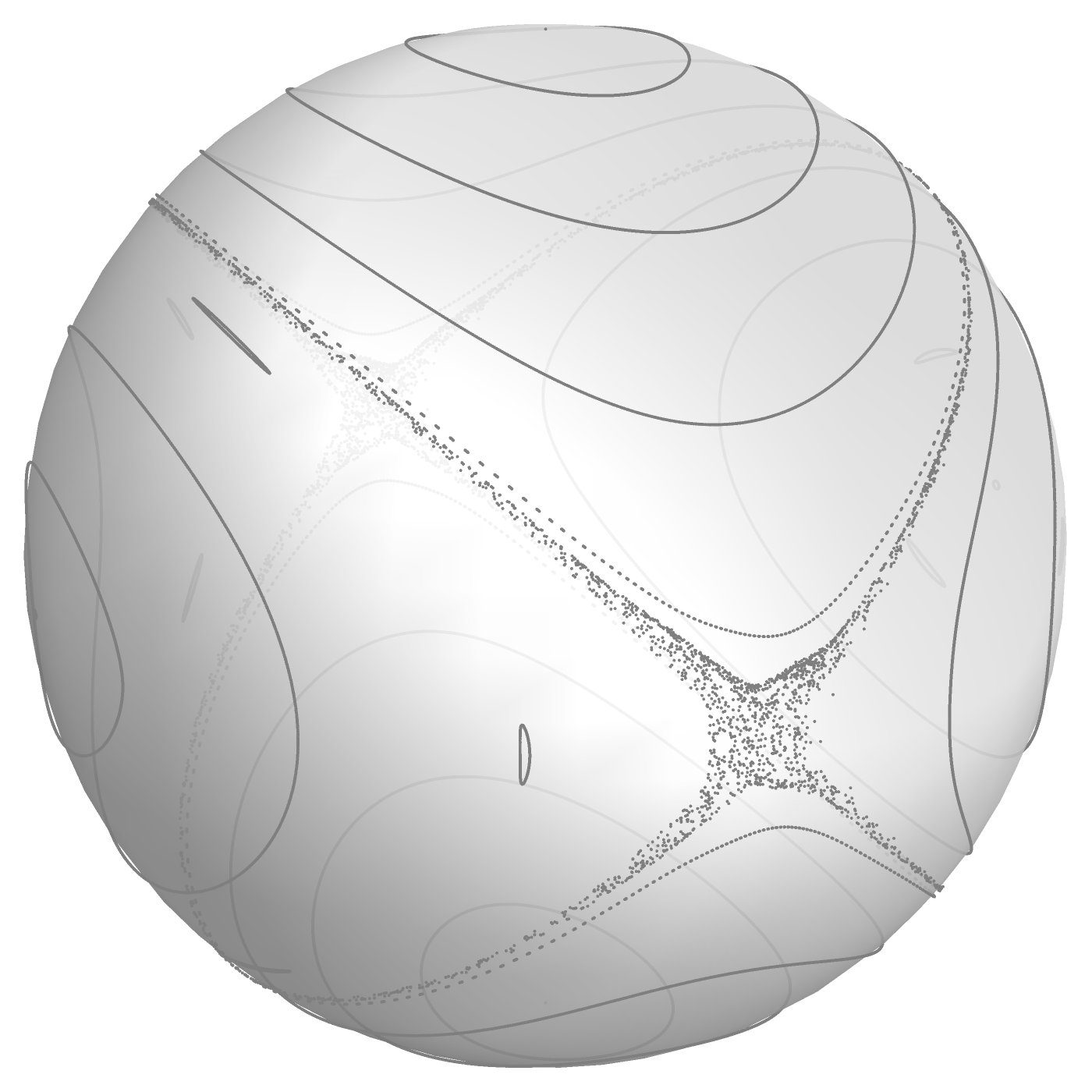}\qquad
	\includegraphics[width=.4\textwidth]{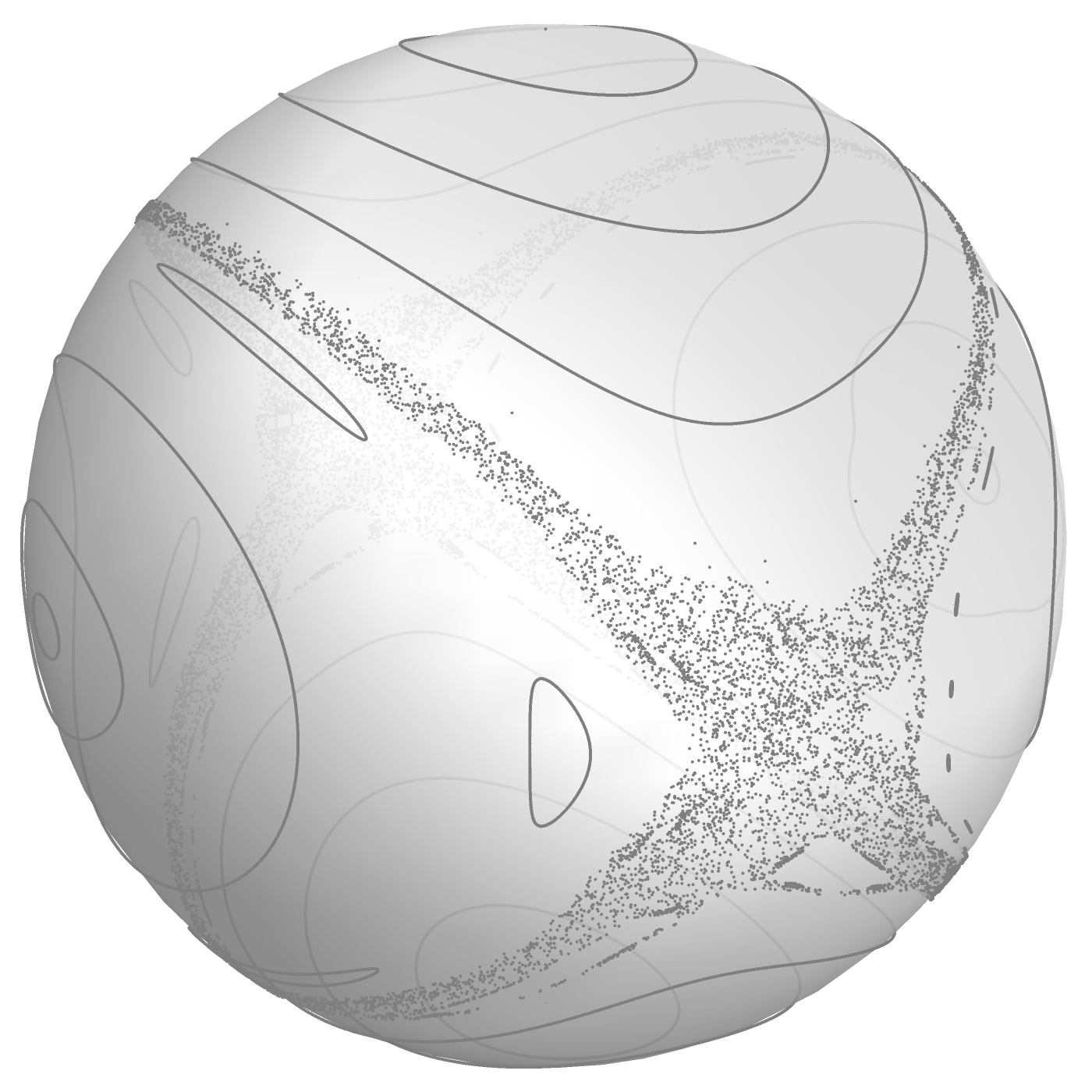}
	\caption{
		Poincaré section of the forced rigid body system with Hamiltonian~(\ref{eq:forced_rb_Ham}), approximated by the spherical midpoint method.
		Left: $\varepsilon=0.01$.
		Right: $\varepsilon=0.07$.
		Notice the development of chaos near the unstable equilibria points.
	}
	\label{forced_rb_fig}
\end{figure}


\subsection{4--particle system: point vortex dynamics on the sphere} 
\label{sub:4_particle_system_point_vortex_dynamics}

Point vortices constitute special solutions of the Euler fluid equations on two-dimensional manifolds; see the survey by~\citet{Ha2007} and references therein.
Consider the codimension zero submanifold of $(S^2)^n$ given by
\begin{equation}
	(S^2)^n_* \coloneqq \{ \w\in (S^2)^n; \w_i \neq \w_j,\; 1\leq i< j\leq n \}.
\end{equation}
Point vortex systems on the sphere, first studied by \citet{Bo1977}, are Hamiltonian systems on $(S^2)^{n}_{*}$ that provide approximate models for atmosphere dynamics with localised areas of high vorticity, such as cyclones on Earth and \emph{vortex streets}~\cite{HuMa2007} on Jupiter.
In absence of rotational forces, the Hamiltonian function is given by
\begin{equation}
	H(\w) = -\frac{1}{4\pi}\sum_{i<j} \kappa_i\kappa_j \ln(2-2\w_i\cdot\w_j).
\end{equation}
In this context, the constants $\kappa_i$ of the symplectic structure~\eqref{eq:symp_form} are called \emph{vortex strengths}.
The cases $n=1,2,3$ are integrable~\cite{KiNe1998,Sa1999}, but the case $n=4$ is non-integrable.
Characterisation and stability of relative equilibria have been studied extensively; see~\cite{LaMoRo2011} and references therein.

In this example, we study the case $n=4$ and $\kappa_i = 1$ by using the time-discrete approximation provided by the spherical midpoint method~\eqref{eq:area_midpoint}.
Our study reveals a non-trivial 4-dimensional invariant manifold of periodic solutions.\footnote{Interestingly, this special symmetric configuration was also found by \citet{LiMoRo2001}. We thank James Montaldi for pointing this out.}
The invariant manifold contains both stable and non-stable equilibria.

\begin{figure}
	\centering
	\input{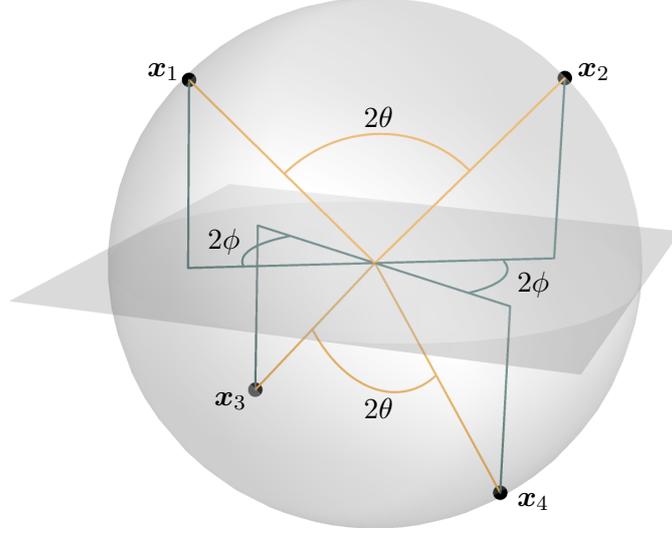}
	\caption{
		Illustration of the invariant submanifold $\bar{\mathcal{I}}\subset (S^2)^4_*$ given by~\eqref{eq:inv_submanifold_vortices}. 
	}
	\label{fig_inv_manifold}
\end{figure}

First, let $\vect c(\theta,\phi) \coloneqq \big(\cos(\phi)\sin(\theta),\sin(\theta)\sin(\phi),\cos(\theta)\big)$ and let
\begin{equation}
	\vect C(\theta,\phi) \coloneqq \begin{pmatrix}
		\vect c(\theta,\phi) \\
		\vect c(\theta,\phi+\pi) \\
		\vect c(\pi-\theta,-\phi) \\
		\vect c(\pi-\theta,\pi-\phi)
	\end{pmatrix} .
\end{equation}
Next, consider the two-dimensional submanifold of $(S^2)^4_*$ given by
\begin{equation} \label{eq:inv_submanifold_vortices}
	\bar{\mathcal{I}} = \{ \vect w\in (S^2)^4_*\,; \vect w = \vect C(\theta,\phi), 
		\theta \in [0,\pi), \phi\in [0,2\pi) \}.
\end{equation}
See \autoref{fig_inv_manifold} for an illustration.


\begin{figure}
	\centering
	\includegraphics[width=.45\textwidth]{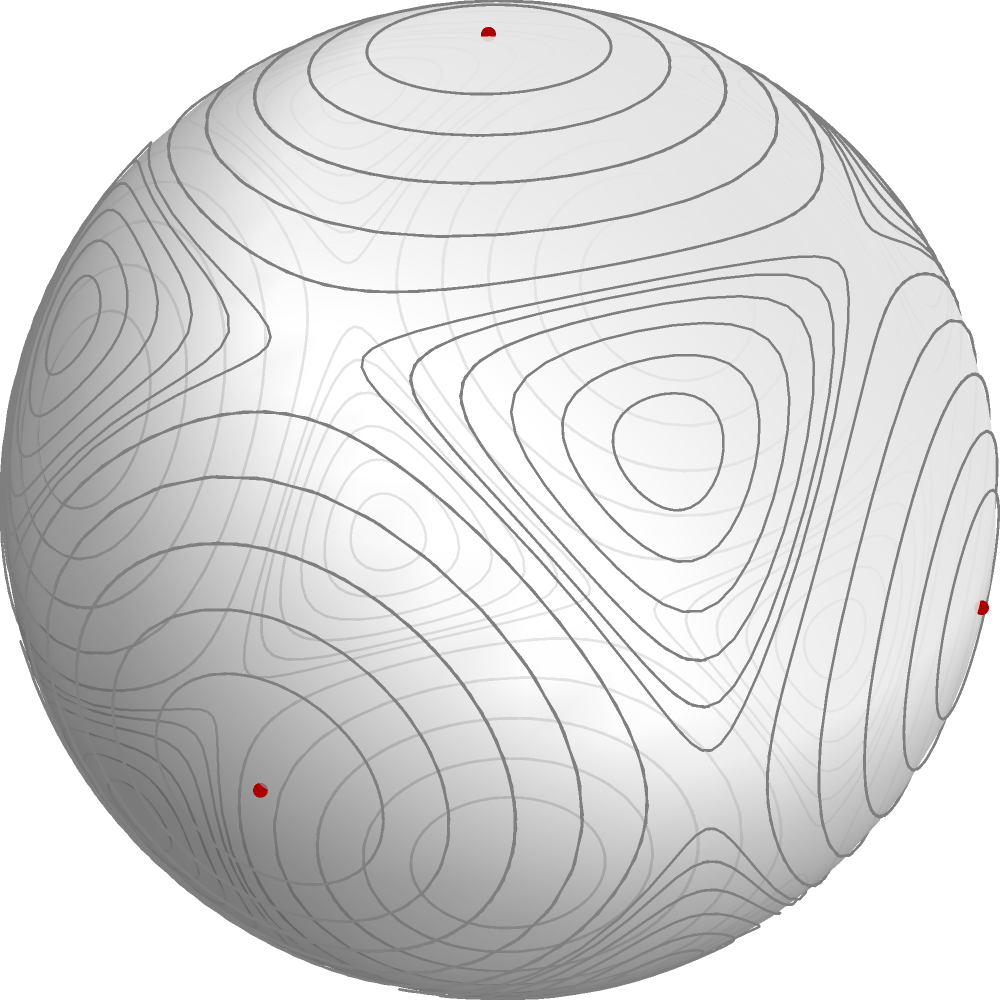}
	\caption{
		Particle trajectories on the invariant manifold $\bar{\mathcal{I}}$.
		The singular points are marked in red (these points are not part of $\bar{\mathcal{I}}$).
		Notice that there are two types of equilibria: the corners and the centres of the ``triangle like'' trajectories.
		The corners are unstable (bifurcation points) and the centres are stable (they are, in fact, stable on all of $(S^2)^4_*$, as is explained in~\cite{LaMoRo2011}).
	}
	\label{fig_vortex_phase_pic}
\end{figure}

The numerical observation that $\bar{\mathcal{I}}$ is an invariant manifold for the discrete spherical midpoint discretisation led us to the following result for the continuous system.

\begin{proposition}\label{prop:vortex_inv_manifold}
	\begin{equation}
		\mathcal{I} = \{ \vect w\in (S^2)^4_*; \vect w = A\cdot \bar{\vect w},\; A\in\SO(3), \bar{\vect w}\in \bar{\mathcal{I}} \}
	\end{equation}
	is a 5--dimensional invariant manifold for the continuous 4--particle point vortex system on the sphere with unitary vortex strengths.
	Furthermore, every trajectory on $\mathcal{I}$ is periodic.
\end{proposition}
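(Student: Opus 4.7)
My plan is to realise $\bar{\mathcal{I}}$ as the fixed-point set of a finite symplectic symmetry of the four-vortex system, and then propagate invariance and periodicity from $\bar{\mathcal{I}}$ to $\mathcal{I}$ via $\SO(3)$-equivariance.

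Writing $a=\sin\theta\cos\phi$, $b=\sin\theta\sin\phi$, $c=\cos\theta$, the configuration $\vect C(\theta,\phi)$ expands to $(\vect x_1,\vect x_2,\vect x_3,\vect x_4)=((a,b,c),(-a,-b,c),(a,-b,-c),(-a,b,-c))$. From this one reads off that $\bar{\mathcal{I}}$ is precisely the fixed-point set in $(S^2)^4_*$ of the Klein four-group $G\cong\Z_2\times\Z_2\subset\SO(3)\times S_4$ generated by $(R_x,(13)(24))$, $(R_y,(14)(23))$, $(R_z,(12)(34))$, where $R_\alpha\in\SO(3)$ is rotation by $\pi$ about the $\alpha$-axis. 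The Hamiltonian with unit vortex strengths is both $\SO(3)$-invariant and $S_4$-invariant, and the symplectic form on $(S^2)^4$ is $G$-invariant; hence $\bar{\mathcal{I}}$ is invariant under the flow $\varphi_t$. By the standard fact that the fixed-point set of a finite symplectic action is a symplectic submanifold, $\bar{\mathcal{I}}$ carries a symplectic form and is two-dimensional.

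Next I would use $\SO(3)$-equivariance of the full system, $\varphi_t(A\bar{\vect w})=A\varphi_t(\bar{\vect w})$, to conclude that $\mathcal{I}=\SO(3)\cdot\bar{\mathcal{I}}$ is also flow-invariant. For the dimension, the fibres of the orbit map $\SO(3)\times\bar{\mathcal{I}}\to\mathcal{I}$ are generically discrete: if $B\in\SO(3)$ satisfies $B\bar{\vect w}\in\bar{\mathcal{I}}$, then for each $\alpha\in\{x,y,z\}$ the element $(B^{-1}R_\alpha B,\sigma_\alpha)$ must lie in the stabiliser of $\bar{\vect w}$, which for generic $\bar{\vect w}$ equals $G$; this forces $B$ to commute with every $R_\alpha$ and hence to lie in the finite rotational octahedral subgroup of $\SO(3)$. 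Therefore $\dim\mathcal{I}=3+2=5$.

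For periodicity, the restricted dynamics $H|_{\bar{\mathcal{I}}}$ is a two-dimensional symplectic flow whose orbits are contained in the level sets $\{H|_{\bar{\mathcal{I}}}=c\}$. Because $H$ diverges logarithmically at the collision locus $\vect x_i=\vect x_j$, generic level sets are compact one-dimensional submanifolds of $\bar{\mathcal{I}}$, hence closed curves; so every trajectory on $\bar{\mathcal{I}}$ is periodic. By $\SO(3)$-equivariance, every trajectory on $\mathcal{I}$ has the form $A\bar\varphi_t(\bar{\vect w}_0)$ and inherits the same period.

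The main obstacle is the generic-isotropy computation underlying $\dim\mathcal{I}=5$: one must carefully identify the normaliser of $G$ inside $\SO(3)\times S_4$ and show that no positive-dimensional family of $B\in\SO(3)$ maps a generic $\bar{\vect w}\in\bar{\mathcal{I}}$ back into $\bar{\mathcal{I}}$. A secondary technical point is establishing compactness of generic level sets of $H|_{\bar{\mathcal{I}}}$, which rests on the confining logarithmic singularity at the collision set.
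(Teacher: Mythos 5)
Your proof is correct in substance but takes a genuinely different, and more conceptual, route than the paper. The paper's own proof is a two-line verification: ``direct calculations show that $X_H$ is tangent to $\bar{\mathcal{I}}$'', followed by the same $\SO(3)$-equivariance step you use to pass from $\bar{\mathcal{I}}$ to $\mathcal{I}$; it says nothing at all about why trajectories are periodic. You instead identify $\bar{\mathcal{I}}$ as the fixed-point set of a Klein four-group inside $\SO(3)\times S_4$ acting by symmetries of the equal-strength Hamiltonian, which yields flow-invariance with no computation (the flow commutes with the symmetry, so it preserves its fixed-point set) and, as a bonus, equips $\bar{\mathcal{I}}$ with the symplectic structure needed to run the level-set argument for periodicity --- your identification of the configuration as $(\vect x_1, R_z\vect x_1, R_x\vect x_1, R_y\vect x_1)$ checks out, and the confinement by the logarithmic singularity of $H$ at collisions does make the level sets of $H|_{\bar{\mathcal{I}}}$ compact. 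Two small caveats. First, in the dimension count the subgroup of $\SO(3)$ commuting with all three $R_\alpha$ is the Klein group itself (the octahedral group is its normaliser, not its centraliser); this only strengthens your conclusion that the fibres are finite. Second, your periodicity argument honestly covers only regular level sets: the critical level through the unstable equilibria (the ``corners'' in the paper's Figure) carries separatrices that are not periodic, so the proposition's claim that \emph{every} trajectory is periodic needs this qualification --- a gap present in the paper's statement as well, and one its own proof does not address at all.
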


\begin{proof}
	Direct calculations show that $X_H$ is tangent to $\bar{\mathcal{I}}$.
	The result for $\mathcal{I}$ follows since~$H$ is invariant with respect to the action of $\SO(3)$ on $(S^2)^4$.
\end{proof}

The example in this section illustrates how numerical experiments with a discrete symplectic model can give insight to the corresponding continuous system.
Generalisation of the result in \autoref{prop:vortex_inv_manifold} to other vortex ensembles is an interesting topic left for future studies.





\subsection{$n$--particle system: Heisenberg spin chain} 
\label{sub:n_particle_system_heisenberg_spin_chain}

The classical Heisenberg spin chain of micromagnetics is a Hamiltonian system on $(S^2)^n$ with strengths $\kappa_i=1$ and Hamiltonian
\begin{equation}\label{eq:nsys_Heisenberg_ham}
	H(\w) = \sum_{i=1}^{n}\w_{i-1}\cdot\w_{i} , \quad\w_0 = \w_n .
\end{equation}

For initial data distributed equidistantly on a closed curve, the system~\eqref{eq:nsys_Heisenberg_ham} is a space discrete approximation of the Landau--Lifshitz equation (see~\cite{La2011} for an overview).
This PDE is known to be integrable, so one can expect quasiperiodic behaviour in the solution.
Indeed, if we use the spherical midpoint method for~\eqref{eq:nsys_Heisenberg_ham}, with $n=100$ and initial data equidistantly distributed on a closed curve, the resulting dynamics appear to be quasiperiodic (see \autoref{fig_spin_chain}). 

The example in this section illustrates that the spherical midpoint method, together with a spatial discretisation, can be used to accurately capture the dynamics of integrable Hamiltonian PDEs on $S^2$.

\begin{figure}
	\centering
	\includegraphics[width=.4\textwidth]{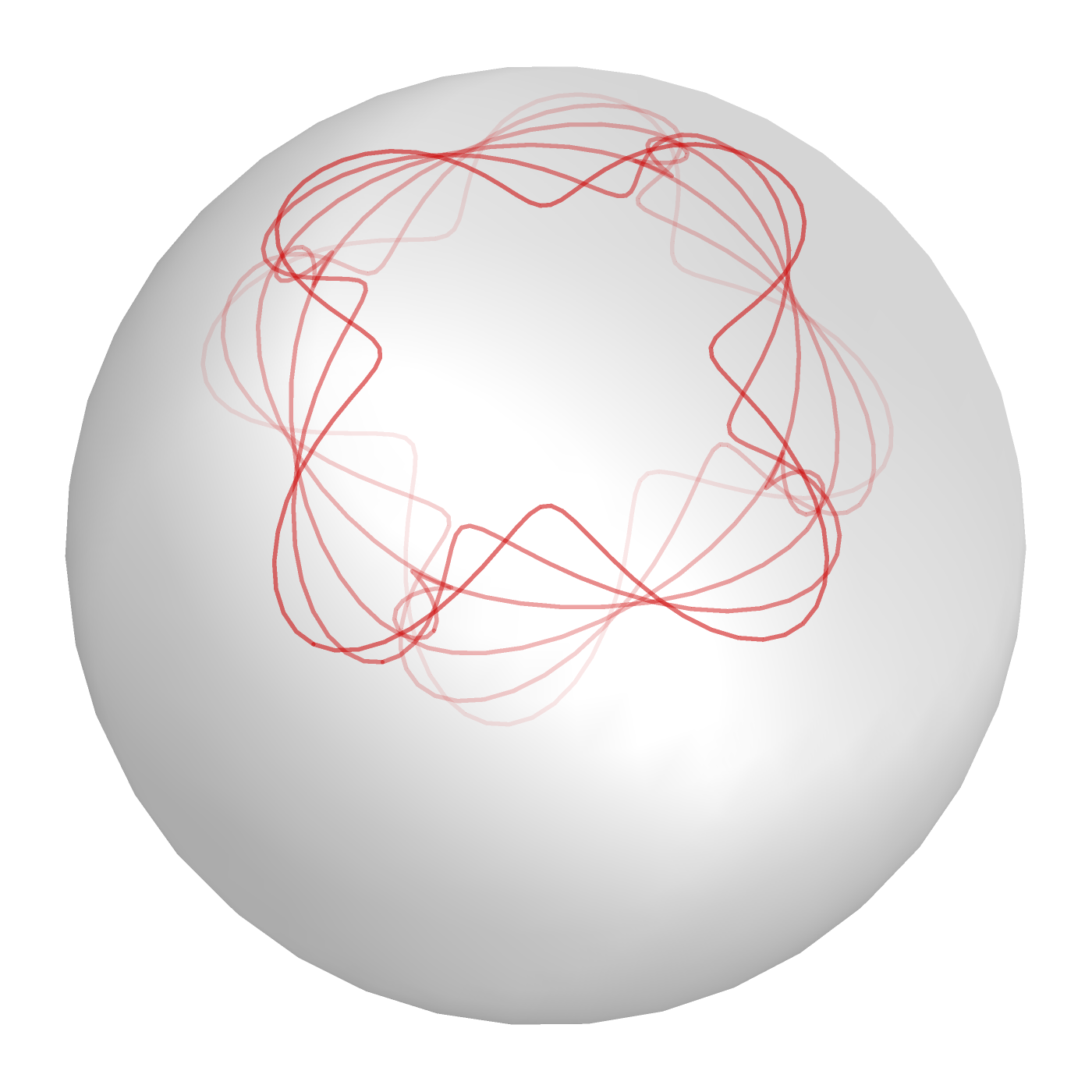}
	\caption{
		Evolution of the Heisenberg spin chain system~\eqref{eq:nsys_Heisenberg_ham} with $n=100$ for initial data equidistantly spaced on a simple closed curve using the spherical midpoint method.
		The corresponding Hamiltonian PDE (the Landau--Lifshitz equation) is known to be integrable.
	}
	\label{fig_spin_chain}
\end{figure}




\appendix
\section{Generalisation to Nambu systems}

It is natural to ask for which non-canonical symplectic or Poisson manifolds other than $(S^2)^n$ generating functions can be constructed. 
In full generality, this is an unsolved problem: no method is known to generate, for example, symplectic maps of a symplectic manifold $\vect F^{-1}(\{ 0\})$ in terms of  $\vect F\colon T^*\RR^d\to \RR^k$. 
In this appendix we shall show that the spherical midpoint method does generalise to Nambu mechanics~\cite{Na1973}. 
Let $C\colon\RR^3\to \RR$ be a homogeneous quadratic function defining the \emph{Nambu system} $\dot{\w} = \nabla C(\w) \times \nabla H(\w)$ with Hamiltonian $H\in C^\infty(\RR^3)$.
For $C(\w)=\frac{1}{2}\abs{\w}^2$, these are spin systems with a single spin. 
The Lagrange system $\dot w_1 = w_2 w_3$, $\dot w_2 = w_3 w_1$, $\dot w_3 = w_1 w_2$ is an example of a Nambu system with $C = \frac{1}{2}(w_1^2-w_2^2)$ and $H=\frac{1}{2}(w_1^2-w_3^2)$.

\begin{proposition} \label{thm:nambu}
	A symplectic integrator for the symplectic manifold given by the level set $C(\w) = c\ne 0$ in a Nambu system $\dot{\w} = \vect f = \nabla C \times \nabla H$, $C=\frac{1}{2}\w^T \vect C \w$, is given by the classical midpoint method applied to the Nambu system with Hamiltonian $H(\w/\sqrt{C(\w)/c})$.
\end{proposition}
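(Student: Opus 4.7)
The plan is to follow the proof of \autoref{thm:main}\ref{it:symplectic} almost verbatim, treating the Nambu bracket on $\RR^3$ as a Lie--Poisson bracket and exploiting that the modified Hamiltonian $\mH(\w)\coloneqq H(\w/\sqrt{C(\w)/c})$ is constant on rays in the sense of \autoref{def:rays}. Without loss of generality I would take $\vect{C}$ symmetric, so that $\nabla C(\w)=\vect{C}\w$.

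First I would verify that $\mH$ is constant on rays: since $C(\lambda\w)=\lambda^2 C(\w)$, one has $\lambda\w/\sqrt{C(\lambda\w)/c}=\w/\sqrt{C(\w)/c}$ for $\lambda>0$, so $\mH(\lambda\w)=\mH(\w)$. The same scaling argument used in \autoref{lma:XHconstant} then shows that the Nambu vector field $X(\w)\coloneqq(\vect{C}\w)\times\nabla\mH(\w)$ is itself constant on rays, because the factor $\lambda$ from $\vect{C}(\lambda\w)$ cancels the factor $\lambda^{-1}$ from $\nabla\mH(\lambda\w)$. Since $\mH$ agrees with $H$ on the leaf $\{C=c\}$, the continuous flows of the original and modified Nambu systems coincide there; moreover, the classical midpoint method preserves the quadratic invariant $C$, so it preserves the leaf.

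Next I would recognise the Nambu bracket $\{F,G\}(\w)=\nabla C\cdot(\nabla F\times\nabla G)=(\vect{C}\w)\cdot(\nabla F\times\nabla G)$ as a genuine Lie--Poisson bracket, with bivector $K(\w)(\a,\b)=\det([\vect{C}\w,\a,\b])$ that is \emph{linear} in $\w$; Jacobi holds because $\mathrm{curl}(\vect{C}\w)=0$ when $\vect{C}$ is symmetric. We are then in exactly the algebraic situation of the proof of \autoref{thm:main}\ref{it:symplectic}: the expansion
\begin{multline*}
K(\wplus)(\aplus,\bplus)-K(\wminus)(\aminus,\bminus) \\
= K(\wdif)(\adif,\bdif)+K(\wdif)(\amid,\bmid)+K(\wmid)(\adif,\bmid)+K(\wmid)(\amid,\bdif)
\end{multline*}
applies verbatim, with the last three terms summing to zero by the infinitesimal Poisson property of the Hamiltonian vector field.

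The decisive step, which I expect to be the main technical obstacle, is showing that the remaining term $h^{3}\det([\vect{C}\,X(\wmid),-DX(\wmid)^{\top}\amid,-DX(\wmid)^{\top}\bmid])$ vanishes. I would prove this by checking that all three column vectors lie in the plane $\wmid^{\perp}$: by symmetry of $\vect{C}$, $\vect{C}X(\wmid)\cdot\wmid=X(\wmid)\cdot\vect{C}\wmid=X\cdot\nabla C=0$ because $X=\nabla C\times\nabla\mH$; and $(DX(\wmid)^{\top}\amid)\cdot\wmid=\amid\cdot DX(\wmid)\wmid=0$ by the Euler identity $DX(\w)\w=0$ for the ray-constant field $X$, and analogously for $\bmid$. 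The determinant therefore vanishes, the classical midpoint is a Poisson map for the Nambu bracket on $\RR^3$, and hence symplectic on the leaf $\{C=c\}$.
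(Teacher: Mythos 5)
Your proof is correct and follows essentially the same route as the paper's: both reduce to the single $h^3$ term $\det([\vect C X(\wmid),-DX(\wmid)^{\top}\amid,-DX(\wmid)^{\top}\bmid])$ and kill it by showing all three columns are orthogonal to $\wmid$, using tangency of $X$ to the level set of $C$ together with the Euler identity $DX(\w)\w=0$ for the ray-constant field. The only difference is that you spell out a few details the paper leaves implicit (homogeneity of $\mH$, ray-constancy of the projected Nambu field, preservation of the leaf), which is fine.
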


\begin{proof}
	The Poisson structure of the Nambu system is given by $K(\w)(\aneutral,\bneutral) = \det([\vect C\w,\aneutral,\bneutral])$. 
	Let $\f$ be the projected Nambu vector field. 
	Calculations as in the proof of \autoref{thm:main} now give
	\begin{align} 
		K(\wplus)(\aplus,\bplus) - K(\wminus)(\aminus,\bminus) &= h^3 \det([\vect C\wdif,\adif,\bdif] \\
		& = h^3 \det([\vect C \f(\wmid), -D\f(\wmid)^{\top}\amid,-D\f(\wmid)^{\top}\bmid]).
	\end{align}
	As before, all three arguments are orthogonal to $\wmid$: $\vect C\f(\wmid)$, because $\f(\wmid)$ is tangent to the level set $C(\w)=c$, whose normal at $\wmid$ is $\vect C\wmid$, and $-D\f(\wmid)^\top\amid$ because $\langle -D\f(\wmid)^\top\amid,\wmid\rangle = \langle \amid,-D\f(\wmid)\wmid\rangle$, and because $\w \mapsto H(\w/\sqrt{C(\w)/c})$ is homogeneous on rays, $\f$ is constant on rays.
\end{proof}

Note that if $H$ is also a homogeneous quadratic (as in the Lagrange system), then the method preserves $C$ and $H$ and generates an integrable map.
The Nambu systems in \autoref{thm:nambu} are all 3-dimensional Lie--Poisson systems. 
There are 9 inequivalent families of real irreducible 3-dimensional Lie algebras \cite{Pa1976}. 
Five of them have homogeneous quadratic Casimirs and are covered by \autoref{thm:nambu}: in the notation of \cite{Pa1976}, they are
$A_{3,1}$ ($C=w_1^2$, Heisenberg Lie algebra)
$A_{3,4}$ ($C=w_1 w_2$, $\mathfrak{e}(1,1)$);
$A_{3,6}$ ($C=w_1^2+w_2^2$, $\mathfrak{e}(2)$);
$A_{3,8}$ ($C=w_2^2 + w_1 w_3$, $\mathfrak{su}(1,1)$, $\mathfrak{sl}(2)$);
$A_{3,9}$ ($C=w_1^2+w_2^2+w_3^2$, $\mathfrak{su}(2)$, $\mathfrak{so}(3)$).
A large set of Lie--Poisson systems is obtained by direct products of the duals of these Lie algebras.
Such a structure was already mentioned by Nambu in his original paper, noting the application to spin systems.
The spherical midpoint method applies to these systems; it generates symplectic maps in neighbourhoods of symplectic leaves with $c \neq 0$. 


\bibliographystyle{../amsplainnat} 
\bibliography{../collective}

\end{document}